\newcommand*{\vertbar}{\rule[-1ex]{0.5pt}{3.5ex}}
\newtheorem{lem}{Lemma} 
\newtheorem{thm}{Theorem}
\theoremstyle{definition}
\newtheorem{definition}{Definition}
\theoremstyle{remark}
\newtheorem{rem}{Remark}
\newcommand{\fH}{\mathcal{H}}
\newcommand{\myc}{\gamma}
\newcommand{\myq}{q}
\newcommand{\bfr}{\mathbf{R}}
\newcommand{\bfrh}{\hat{\mathbf{R}}}
\newcommand{\td}{\tilde{d}}
\newcommand{\bfA}{\mathbf{A}}
\newcommand{\bfy}{\mathbf{y}}
\newcommand{\bX}{\mathbf{X}}
\newcommand{\bx}{\mathbf{x}}
\newcommand{\bZ}{\mathbf{Z}}
\newcommand{\bs}{\mathbf{s}}
\newcommand{\bd}{\mathbf{d}}
\newcommand{\bS}{\mathbf{S}}
\newcommand{\tr}{\mathrm{tr}}
\newcommand{\bD}{\mathbf{D}}
\newcommand{\bU}{\mathbf{U}}
\newcommand{\bu}{\mathbf{u}}
\newcommand{\bv}{\mathbf{v}}
\newcommand{\toas}{\overset{\text{a.s.}}{\to}}
\newcommand{\buni}{\bu_{n,i}}
\newcommand{\bfrs}{\mathbf{R}^*}
\newcommand{\bfrsi}{\mathbf{R}^{*-1}}
\newcommand{\toprob}{ \overset{\mathrm{p}}{\to} }
\newcommand{\by}{\mathbf{y}}
\newcommand{\ba}{\boldsymbol{a}}
\newcommand{\bb}{\boldsymbol{b}}
\DeclareMathOperator*{\argmin}{arg\,min}
\newcommand{\bdeltahat}{\boldsymbol{\hat{\delta}}}
\newcommand{\bdeltacheck}{\boldsymbol{\check{\delta}}}
\newcommand{\bdhat}{\boldsymbol{\hat{d}}}
\newcommand{\bdcheck}{\boldsymbol{\check{d}}}
\begin{document}
%
\title{Space-Time Adaptive Detection at Low Sample Support}  
%
%
%

\author{Benjamin~D.~Robinson,~\IEEEmembership{Member,~IEEE}, 
        Robert~Malinas,~\IEEEmembership{Student Member,~IEEE}, Alfred O. Hero III,~\IEEEmembership{Fellow,~IEEE}
\thanks{This work was generously supported by AFOSR grant 19COR1936 and ARO grant W911NF-15-1-0479}
\thanks{Benjamin Robinson is with Air Force Research Lab.}
\thanks{Robert Malinas is with University of Michigan}
\thanks{Manuscript received April 19, 2005; revised August 26, 2015.}}

\maketitle

\begin{abstract}
%
%
An important problem in space-time adaptive detection is the estimation of the large $p\times p$ interference covariance matrix from training signals.
When the number of training signals $n$ is greater than $2p$, existing estimators are generally considered to be adequate, as demonstrated by fixed-dimensional asymptotics.  But in the low-sample-support regime ($n < 2p$ or even $n < p$) fixed-dimensional asymptotics are no longer applicable.  The remedy undertaken in this paper is to consider the ``large dimensional limit'' in which $n$ and $p$ go to infinity together.
In this asymptotic regime, a new type of estimator is defined (Definition~2), shown to exist (Theorem~1), and shown to be detection-theoretically ideal (Theorem~2). 
Further, asymptotic conditional detection and false-alarm rates of filters formed from this type of estimator are characterized (Theorems~3 and 4) and shown to depend only on data that is given, even for non-Gaussian interference statistics.
The paper concludes with several Monte Carlo simulations that compare the performance of the estimator in Theorem~1 to the predictions of Theorems~2-4, showing in particular higher detection probability than Steiner and Gerlach's Fast Maximum Likelihood estimator.
\end{abstract}
\begin{IEEEkeywords}
Covariance estimation, detection, adaptive matched filtering, space-time adaptive processing, random matrix theory, high-dimensional statistics, rotation-equivariance, spiked covariance model, nonlinear shrinkage 
\end{IEEEkeywords}

%
\IEEEpeerreviewmaketitle

%
%
%
%

\section{Introduction} \label{sec:intro}
\IEEEPARstart{A}{fundamental} challenge in radar is the multichannel detection of targets embedded in interference consisting of jammers and other non-target scatterers known as clutter.  \emph{Space-time adaptive processing} (STAP) is a technique that amounts to applying an adaptive linear filter to a signal received from a particular range cell to test whether it matches a given spatio-temporal (angle-Doppler) signature \cite{ward1998space,guerci2014space}.  Though primarily applied in radar, the methods of STAP arise in a multitude of detection and estimation problems: for example, those arising in wireless communications, hyperspectral imaging, and sonar signal processing. 

Common adaptive detectors used in STAP such as the linear filter \cite{reed1974rapid}, the adaptive matched filter \cite[Equation~8]{robey1992cfar}, Kelly's GLRT \cite{kelly1986adaptive}, or the adaptive coherence estimator \cite{kraut1999cfar,kraut2001adaptive,kraut2005adaptive,mcwhorter1996adaptive} depend upon an estimate of the large $p\times p$ interference covariance matrix called the sample covariance matrix, which is formed from $n$ interference-only training samples.
When $n > 2p$, the Reed-Mallett-Brennan rule of thumb states that these detectors can be expected to perform well \cite{reed1974rapid}.
But in STAP, $n$ is not only smaller than $2p$ but often smaller than $p$ due to resolution requirements, the presence of a large number of other targets and target-like scatterers, the fact that the interference statistics are highly non-stationary from range cell to range cell, and systems-level restrictions such as bandwidth \cite{himed1997analyzing}. As a result, in this low-sample-support regime, a multitude of ``robust'' maximum-likelihood covariance estimators have been suggested to replace sample covariance \cite{nitzberg1980application, de2003maximum,ginolhac2014exploiting,li1999computationally, fuhrmann1991application, abramovich1998positive,  barton1997structured,fuhrmann1990estimation,conte1998adaptive,kraay2007physically}. 
But provable properties of these estimators all rely upon the assumption that $n\to\infty$ while $p$ remains fixed, which cannot be the case if $n < 2p$.

In this paper, we resolve this problem by allowing $p$ to go to infinity as well.  To be more precise, we enter the 
 ``large-dimensional asymptotic regime'' of random matrix theory, in which $n$ and $p$ \emph{both} go to infinity and do so in a fixed ratio.  
In this regime, we present a new consistency condition (Definition~2) and accomplish the following provable results: 
\begin{itemize}
\item In Theorem~\ref{thm:S-consistency-1}, we show that a consistent estimator exists under the spiked assumption of Johnstone \cite{johnstone2001distribution}.
\item In Theorem~\ref{thm:nsnr-consistent}, we prove that consistent estimators are detection-theoretic optimal among shrinkage estimators in the formation of filters.
\item In Theorem~\ref{thm:pfa}, we consistently estimate a conditional false-alarm rate of a consistent estimator's filter.
\item In Theorem~\ref{thm:pd}, we characterize the corresponding conditional detection rate. 
\end{itemize}
Notably, our estimates of the conditional detection and false-alarm rates are universal in the sense that depend only on data that is given, even for non-Gaussian interference statistics. 

In Section~\ref{sec:background}, we provide background on STAP detection, as well as material about high-dimensional asymptotics and shrinkage estimators.  In Section~\ref{sec:shrinkage-estimators}, we present our central consistency condition and the optimality result for consistent estimators.  Section~\ref{sec:performance}, we present asymptotic estimates of conditional false-alarm and detection probabilities.
In Section~\ref{sec:simulations}, we present the results of several numerical simulations.
Finally, in Section~\ref{sec:conclusion} we present our conclusions and suggest several directions for future study.

\section{Background} \label{sec:background}
A radar detection system finds targets in a region of interest by transmitting electromagnetic waves toward the region and processing the subsequent reflections, or \emph{echoes}, from objects therein.  An echo is the superposition of the reflections from targets, should they be present, and ``disturbance'' sources, such as one- or multi-bounce reflections from \emph{clutter}, i.e. non-target objects (ground, sea, rain, birds, etc.); electronic emmisions from internal and external sources; electromagnetic interference from man-made sources; and potentially adversarial jamming in the form of electronic noise or false targets. 
Collectively, all non-noise sources of disturbance are referred to as interference.

Modern radar systems have several antennas that transmit a sequence of pulses and passively ``listen'' for echoes in between pulses. If $P$ is the number of pulses in the transmitted sequence and $J$ is the number of antennas, an aggregate $p=PJ$ continuous-time signals are received. Each of these received signals are I/Q demodulated 
and sampled. Prior to detection, the sampled signals are then pre-processed, and the result is a $p \times N$ complex-valued matrix of data, where $N$ denotes the number of samples taken of the return from a single pulse on a single antenna. Each $p$-dimensional column of data corresponds to one of $N$ positions in space along the radial direction of the transmitted electromagnetic waves, called \emph{range cells}. 

Mathematically, the return $\bx\in \mathbb{C}^p$ from a given range cell when the signal is absent is modeled as a mean-zero random vector called the \emph{disturbance vector}.  The $p\times p$ covariance $\bfr$ of the disturbance vector is called the disturbance covariance matrix or interference covariance matrix.  The maximum-entropy distribution for a $p$-dimensional complex random vector with fixed mean $\boldsymbol{\mu}$ and covariance $\bfr$ is the circularly symmetric, complex Gaussian $\mathcal{CN}(\boldsymbol{\mu},\bfr)$.  Because of this fact, and for convenience, the disturbance vector's distribution is often modeled as $\mathcal{CN}(\mathbf{0},\bfr)$.  When the signal is present, 
the ideal return is a multiple of a known spatio-temporal ``steering vector'' $\bs$ by an unknown complex scalar $a$ \cite{wicks2006space}.  In reality, the ideal return is corrupted by the disturbance process, and so the return is modeled as an additive superposition of the scaled steering vector and the disturbance vector.
We therefore wish to test the following compound hypotheses on a return $\bx\in\mathbb{C}^p$:
\begin{equation} \label{eq:prob}
\begin{array}{ll}
\fH_{0}:& \bx\sim \mathcal{CN}(\mathbf{0},\bfr) \\
\fH_{1}:& \bx\sim \mathcal{CN}(a\bs,\bfr),\ a\ne 0.
\end{array}
\end{equation}
In other words, we wish to test the hypothesis $a=0$ versus $a\ne 0$. 

When $\bfr$ is known, a common decision rule for testing
the above hypotheses is the Generalized Likelihood Ratio Test (GLRT) of \cite[Equation~7]{robey1992cfar}, which compares the maximum of the log-likelihood ratio
 over the unknown value of $a$ to a threshold $\tau$:
\begin{equation} \label{eq:AMF}
\frac{\left|\bs'\bfr^{-1}\bx\right|^{2}}{\bs'\bfr^{-1}\bs} \overset{\fH_1}{\underset{\fH_0}{\gtrless}} \tau, 
\end{equation}
where $\bs$ and $\bx$ are column vectors, and $\bs'$ denotes the conjugate transpose of $\bs$. By definition, the \emph{false-alarm rate} of a statistical hypothesis test is the probability of deciding $\fH_1$ when $\fH_0$ is true.  For a threshold test, this is the probability that the test statistic \eqref{eq:AMF} crosses a threshold $\tau$ given that $\fH_0$ is in force.  A threshold test is said to be CFAR (constant false-alarm rate) if its false-alarm rate depends only on $\tau$. This is a highly desirable property as it allows the designer to both set the significance level of the test and ensure that maximum detection probability is obtained for that level.  The GLRT above is known to be a CFAR test of the hypotheses \eqref{eq:prob}.

When $\bfr$ is unknown, it is common to use
the \emph{adaptive matched filter} (AMF) detector
\begin{equation} 
T(\bs,\bfrh, \bx) \coloneqq \frac{\left|\bs'\bfrh^{-1}\bx\right|^{2}}{\bs'\bfrh^{-1}\bs}\overset{\fH_1}{\underset{\fH_0}{\gtrless}} \tau, \label{eq:real-amf}
\end{equation}
where $\bfrh$ is an estimate of the population covariance $\bfr$ obtained from training data that are statistically independent of $\bx$.
We assume there are $n$ such training data 
 $\bx_1, \bx_2,\dots \bx_n$ that are iid distributed as $\fH_0$.
For example, 
the training data can be snapshots from target-free range bins near the cell under test.
We will often write such training samples as a $p \times n$ matrix:
\def\tmp{
\begin{bmatrix}
\vertbar & \vertbar & ~ & \vertbar \\
\bx_1 & \bx_2 & \cdots & \bx_n \\
\vertbar & \vertbar & ~ & \vertbar
\end{bmatrix}.
}
\[
\bX_n \coloneqq 
\stackMath\def\stackalignment{r}%
  \stackon%
    {\ensuremath{\normalsize p} \left\{\tmp\right.}%
    {\overbrace{\phantom{\smash{\tmp\mkern -36mu}}}^{\scriptstyle\textstyle \ensuremath{n}}\mkern 25mu}%
\]

Under the assumed Gaussian-distributed hypotheses, the detection probability of the test \eqref{eq:AMF} is monotonic in the quantity $|a|^2\bs'\bfr^{-1}\bs$, known as
 the \emph{signal-to-interference-plus-noise-ratio} (SINR) of the associated filter. 
By contrast the conditional detection probability of the AMF, given the training data, is 
monotonic in
the \emph{effective} 
SINR of the filter, introduced in \cite{reed1974rapid}:
\begin{equation} \label{eq:rho-squared}
\nu^{2}(\bs, \bfrh,\bfr):=|a|^{2}\frac{\left(\bs'\bfrh^{-1}\bs\right)^{2}}{\bs'\bfrh^{-1}\bfr\bfrh^{-1}\bs}.
\end{equation}
To see this, let us consider $p_{\text{fa}}$ and $p_\text{d}$, the conditional false-alarm and detection rates of the test in \eqref{eq:real-amf} given the training data. Since $\bx$ is Gaussian, $T$ is, conditioned on $\bX_n$, a chi-square random variable scaled by $\xi = \bs'\bfrh^{-1}\bfr\bfrh^{-1}\bs/\bs'\bfrh^{-1}\bs$ under $\fH_0$, and thus $p_{\text{fa}} = e^{-\tau/\xi}$.  Under $\fH_1$, $T$ is, conditioned on $\bX_n$, a scaling of a noncentral chi-square distribution with noncentrality parameter $|a|^2\bs'\bfrh^{-1}\bs$, where the scaling factor is again $\xi$.  Thus, $p_{\text{d}} = Q(\tau/\xi, \nu^2)$, where
\[
Q(\alpha, \beta) \coloneqq \int_\alpha^\infty e^{-z-\beta}I_0(2\sqrt{\beta z})\, dz
\]
and $I_\mu$ denotes the modified Bessel function of the first kind.  The form of $Q$ follows from \cite[p.~132]{johnson1970continuous}. As a result, 
\[
p_{\text{d}} = Q(-\log p_{\text{fa}}, \nu^2),
\]
which is monotonic in $\nu^2$, as claimed.

The conditional detection probability $p_\text{d}$ is also 
monotonic in the
\emph{normalized signal-to-interference-plus-noise ratio} (NSINR) \cite{reed1974rapid}:
\begin{equation} \label{eq:esnr}
\eta(\bs,\bfrh, \bfr) \coloneqq \frac{(\bs'\bfrh^{-1}\bs)^2}{(\bs'\bfr^{-1}\bs)(\bs'\bfrh^{-1}\bfr\bfrh^{-1}\bs)}. 
\end{equation}
By the Cauchy-Schwarz inequality, this quantity lies between 0 and 1.  Converting NSINR to decibels and taking the absolute value, one obtains the Reed-Mallet-Brennan (RMB) loss; hence, for a given covariance estimate $\bfrh$, a higher NSINR yields a lower RMB loss, and vice versa.

The random variable $p_\text{d}$, as a function of $\bfrh$, or to monotonic equivalence, $\eta$ as a function of $\bfrh$, is an example of a reward function.  In finite-sample theory, one is given a sample of a fixed size and the goal is to find an estimator so that reward is maximized.  However, this is often intractable.  Instead, one often performs an \emph{asymptotic} analysis. This is accomplished by several steps: (a) the problem is embedded in a sequence of estimation problems of increasing sample size $n$, (b) an estimator is proposed for each problem, and (c) the limiting form of the reward is derived as $n\to\infty$.

When applied to covariance estimation, one embeds  
 the described covariance estimation problem into a sequence of covariance estimation problems indexed by the number of samples $n$, and 
asymptotic values of $p_{\text{fa}}$ and $p_\text{d}$ are computed for an estimator $\bfrh_n$.   
 Assuming $p$ is fixed and $n > p$, the classical Sample Covariance Matrix (SCM) can be easily analyzed in the asymptotic limit. This estimator is defined by
\begin{equation*} 
\mathbf{S}_n = \frac{1}{n}\sum_{i=1}^n \bx_i \bx_i' = \frac{1}{n} \bX_n \bX_n',
\end{equation*}
where the vector products in the summation above are rank-one outer product matrices.
 In the Gaussian setting, the SCM is the
maximum likelihood estimator of the population covariance $\bfr$ and is a consistent estimator. It follows that $\mathbb{E}\eta(\bs,\bS_n,\bfr)\to 1$ as $n\to\infty$ for $\left\Vert \bs \right\Vert > 0$, so $\bS_n$ asymptotically maximizes detection performance.


In applications such as STAP, not only $n$ but also $p$ is large, and $n$ is not much larger than $p$.  This makes the fixed-dimensional asymptotics inapplicable.
Instead, we consider the \emph{high-dimensional} case, where $n$ and $p=p_n$ \emph{both} go to infinity.  The goal, as before, is to optimize the detection rate, but unlike before,  the dimension of the $n^\text{th}$ problem increases with $n$.  
Although other reward functions have been studied in the high-dimensional limit, as in \cite{donoho2018optimal}, to our knowledge the detection rate of an adaptive matched filter has not.



The $n^\text{th}$ problem in question is discriminating between the following hypotheses: 
\begin{equation} \label{eq:projected-hyp-test}
\begin{array}{ll}
\fH_{0}^n:& \bx\sim \mathcal{CN}(0,\bfr_n) \\
\fH_{1}^n:& \bx \sim \mathcal{CN}(a\bs_n,\bfr_n),\ a\ne 0,
\end{array}
\end{equation}
where for each $n$, $\bfr_n$ is a $p_n\times p_n$ Hermitian positive-definite matrices with smallest eigenvalue 1 and $M \coloneqq \left\Vert \bfr_n \right\Vert$ is fixed and finite, and $\bs_n$ is uniformly distributed on the unit sphere in $\mathbb{C}^{p_n}$.
The latter assumption is simply a modeling assumption which will enable us to say whether an estimator performs well for ``most'' steering vectors.  Note that there is no loss in generality in assuming $\bs_n$ is a unit-norm vector.
For the $n^\text{th}$ problem, we will assume the availability of a $p_n \times n$ matrix of $\fH_0^n$-distributed auxiliary training data $\bX_n$ independent of $\bs_n$. We further make the following modeling assumption throughout this paper:
\begin{itemize}
\item $[\textsc{Asy}(\myc)]$  The number of samples $n$ and the number of dimensions $p_n$ in each sample follow the proportional-growth limit $p_n/n \to \myc \in (0,1)\cup (1,\infty)$ as $n\to \infty$. 
\end{itemize}
\noindent 
The assumption [$\textsc{Asy}(\myc)$] appears in \cite{ledoit2011eigenvectors}.
A consequence of $[\textsc{Asy}(\myc)$] is that throughout this paper, when we write $n\to\infty$ it will be assumed that $p_n\to\infty$ as well.  The assumption that $\fH_0^n$ and $\fH_1^n$ are Gaussian is convenient but certainly not necessary for what follows: by the Berry-Esseen theorem \cite{berry1941accuracy,esseen1942liapunoff}, a properly normalized matched filter applied to $\bx$ is distributed asymptotically the same as in the Gaussian case, provided a mild decay condition is met.  In particular, the asymptotic conditional detection rate of $T_n = T(\bs_n,\bfrh_n,\bx)$ is completely determined by NSINR $\eta_n = \eta(\bs_n,\bfrh_n,\bfr_n)$ even in the non-Gaussian case. As a result, in general a sensible choice of $\bfrh_n$ is one that optimizes $\eta_n$. 
A particular kind of covariance estimator called a \emph{shrinkage estimator} has been popular since at least the time of C. Stein \cite{stein1975estimation,stein1986lectures}. This term certainly includes diagonal loading estimators, which occur in radar \cite{fuhrmann1988existence}, mathematical finance \cite{ledoit2004well}, Tikhonov regression \cite{tikhonov1943stability}, and many other areas.  It can also more generally mean any estimator that shares the eigenspace decomposition of the sample covariance matrix \cite{ledoit2018analytical,ledoit2020analytical,donoho2018optimal}.  In this paper, we take the latter definition, adding explicitly a condition (condition (ii) below) that is virtually always satisfied in practice.  The definition follows:
\begin{definition} \label{def:shrinkage}
Let $\mathcal{P}_{p_n}$ be the cone of $p_n\times p_n$ Hermitian positive-definite matrices.  Let $\bS_n = n^{-1}\bX_n \bX_n'$.  We say that $\bfrh_n:\mathbb{C}^{p_n\times n} \to \mathcal{P}_{p_n}$ is a \emph{shrinkage estimator} if
\begin{itemize}
\item[(i)] $\bfrh_n$ is of the form $\bU_n \bD_n \bU_n'$ where $\bU_n$ is a random element of the family of matrices such that $\bU_n' \bS_n \bU_n$ is diagonal, and $\bD_n$ is a positive-definite, diagonal random matrix.
\item[(ii)] the random variables $\limsup_n \left\Vert \bfrh_n \right\Vert$ and $\sup_n\left\Vert \bfrh_n^{-1} \right\Vert$ are almost surely bounded.
\end{itemize}
\end{definition}  

One of the simplest examples of a shrinkage estimator is a positive linear combination of $\bS_n$ and the $p_n\times p_n$ identity matrix $\mathbf{I}_{p_n}$: just take $\mathbf{D}_n(\boldsymbol{\Lambda})=\alpha\boldsymbol{\Lambda} + \beta \mathbf{I}_{p_n} $.  The $\limsup$ condition in Definition~\ref{def:shrinkage}(ii) holds because if $\bZ_n=\bfr_n^{-1/2} \bX_n$,  $\left\Vert\alpha\bS_n+\beta\mathbf{I}_{p_n} \right\Vert \le \alpha M \left\Vert \bZ_n \bZ_n' /n\right\Vert + \beta$, and $\left\Vert\bZ_n\bZ_n'/n\right\Vert$ converges almost surely under a fourth-moment condition to $(1+\sqrt{\myc})^2$ \cite{bai2008limit}.  
Many structure-constrained maximum likelihood estimators 
\cite{steiner2000fast,anderson1963asymptotic,wax1985detection} provide further examples. The name ``shrinkage'' is motivated by the fact that many shrinkage estimators reduce higher eigenvalues of $\bS_n$ and possibly increase lower ones (see \cite{ledoit2004well}), thus ``shrinking'' the spectrum of $\bS_n$.  The result is an estimator that shares the eigenspace decomposition of sample covariance but improves its condition number. 

\section{Optimal Shrinkage Estimators} \label{sec:shrinkage-estimators}


A central question in STAP is how to choose a shrinkage estimator $\bfrh_n$ so that $\eta(\bs_n, \bfrh_n, \bfr_n)$ is as large as possible.
The answer depends on $\bs_n$ in a complicated manner; however, this question turns out to be tractable in high dimensions.
Indeed, in the following lemma we show that $\eta(\bs_n,\bfrh_n,\bfr_n)$ is asymptotically independent of $\bs_n$.
\begin{lem} \label{lem:trace-approx-eta}
Let $\bfrh_n $ be a sequence of shrinkage estimators.  Then we have
\begin{equation} \label{eq:trace-approx-eta}
\left \vert \eta(\bs_n, \bfrh_n, \bfr_n) - \tilde{\eta}(\bfrh_n,\bfr_n) \right \vert \overset{a.s.}{\to} 0,
\end{equation}
as $n\to\infty$, where
\begin{equation} \label{eq:eta-tilde}
\tilde{\eta}(\bfrh,\bfr) :=
\frac{\tr(\bfrh^{-1})^2}{\tr(\bfr^{-1})\tr(\bfrh^{-2}\bfr)}.
\end{equation}
\end{lem}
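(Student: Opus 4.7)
The core idea is that $\bs_n$, being uniform on the unit sphere of $\bbC^{p_n}$ and \emph{independent} of the training matrix $\bX_n$ (and hence of $\bfrh_n$), concentrates quadratic forms onto normalized traces. Concretely, for any sequence of (possibly random) Hermitian matrices $A_n$ independent of $\bs_n$ with $\limsup_n \|A_n\| < \infty$ almost surely, I would establish
\[
\bs_n' A_n \bs_n - \frac{1}{p_n}\tr(A_n) \overset{a.s.}{\to} 0.
\]
Granting this, the lemma is immediate by applying it to the three matrices appearing in the definition of $\eta$, namely $A_n^{(1)} = \bfrh_n^{-1}$, $A_n^{(2)} = \bfr_n^{-1}$, and $A_n^{(3)} = \bfrh_n^{-1}\bfr_n\bfrh_n^{-1}$. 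All three have almost surely bounded operator norms: $\bfr_n^{-1}$ by the assumption that the smallest eigenvalue of $\bfr_n$ is $1$, and $\bfrh_n^{-1}$, $\bfrh_n^{-1}\bfr_n\bfrh_n^{-1}$ by clause (ii) of Definition~\ref{def:shrinkage} together with the bound $\|\bfr_n\|\le M$.

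To prove the concentration claim, I would write $\bs_n = \bz_n / \|\bz_n\|$ with $\bz_n \sim \mathcal{CN}(\mathbf{0}, \mathbf{I}_{p_n})$ independent of $A_n$, so that
\[
\bs_n' A_n \bs_n = \frac{\bz_n' A_n \bz_n / p_n}{\|\bz_n\|^2/p_n}.
\]
Conditioning on $A_n$, the Hanson--Wright inequality gives the subgaussian tail
\[
\Pr\bigl(\,|\bz_n' A_n \bz_n - \tr(A_n)| > t \,\big|\, A_n\bigr) \le 2\exp\!\Bigl(-c\min\!\bigl(t^2/\|A_n\|_F^2,\ t/\|A_n\|\bigr)\Bigr),
\]
and a similar bound (or elementary $\chi^2$ concentration) controls $\|\bz_n\|^2 - p_n$. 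Since $\|A_n\|_F^2 \le p_n \|A_n\|^2$, taking $t = \varepsilon p_n$ in the numerator and $t = \varepsilon p_n$ in the denominator yields deviation probabilities summable in $n$, so Borel--Cantelli produces almost-sure convergence
\[
\frac{\bz_n' A_n \bz_n}{p_n} - \frac{\tr(A_n)}{p_n}\overset{a.s.}{\to} 0,\qquad \frac{\|\bz_n\|^2}{p_n}\overset{a.s.}{\to} 1.
\]
Because $\tr(A_n)/p_n$ is bounded by $\limsup \|A_n\| < \infty$, dividing numerator by denominator preserves the almost-sure convergence, giving the desired concentration of $\bs_n' A_n \bs_n$ around $\tr(A_n)/p_n$.

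To finish, substitute the three concentrations into the definition \eqref{eq:esnr} of $\eta$. The crucial observation is that the $p_n$ factors cancel exactly:
\[
\frac{(\tr(\bfrh_n^{-1})/p_n)^2}{(\tr(\bfr_n^{-1})/p_n)\cdot(\tr(\bfrh_n^{-2}\bfr_n)/p_n)} = \tilde\eta(\bfrh_n,\bfr_n),
\]
so no further rescaling is needed. For the ratio to converge (rather than just the numerator and denominator separately), I need the denominator $(\bs_n'\bfr_n^{-1}\bs_n)(\bs_n'\bfrh_n^{-1}\bfr_n\bfrh_n^{-1}\bs_n)$ to stay almost surely bounded away from zero. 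This follows because $\bfr_n^{-1}$ and $\bfrh_n^{-1}\bfr_n\bfrh_n^{-1}$ both have operator norm bounded \emph{above}, so their normalized traces are bounded below by their smallest eigenvalues divided by $1$, and in turn $\bs_n' A \bs_n$ concentrates around those strictly positive traces. The corresponding lower bound on $\tilde\eta$'s denominator follows from the same trace estimates, making division well-defined in the limit.

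\textbf{Main obstacle.} The only subtlety is the randomness of $\bfrh_n$ (and, for some authors, of $\bfr_n$): the Hanson--Wright inequality applies cleanly only to deterministic $A$. Conditioning on $\bX_n$ removes this obstacle, since $\bs_n \perp \bX_n$ and clause (ii) of Definition~\ref{def:shrinkage} provides a single almost-sure upper bound on $\|\bfrh_n\|,\|\bfrh_n^{-1}\|$ uniform in $n$; summability in Borel--Cantelli then holds pathwise on a probability-one event.
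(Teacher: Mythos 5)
Your proposal is correct and follows essentially the same route as the paper: reduce $\eta(\bs_n,\bfrh_n,\bfr_n)$ to the three quadratic forms in $\bs_n$, show each concentrates around its normalized trace (using independence of $\bs_n$ from $\bX_n$, the norm bounds of Definition~\ref{def:shrinkage}(ii), and Borel--Cantelli), and then pass to the ratio using that the denominators stay bounded away from zero. The only difference is the concentration tool --- you prove the quadratic-form concentration via the Gaussian representation $\bs_n=\bz_n/\lv\bz_n\rv$ with Hanson--Wright and $\chi^2$ bounds, whereas the paper applies Lipschitz concentration on the sphere directly --- which is an equivalent technical variation, not a different argument.
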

\begin{proof}
See Appendix~\ref{sec:trace}.
\end{proof}
	

Motivated by the above lemma, we use $\tilde{\eta}$ as a proxy for $\eta(\bs_n, \, \cdot\, )$ and seek to find the optimal generalized shrinkage estimator with respect to the former.   Let $\bfrh_n  = \bU_n \bD_n \bU'_n$, where $\bU_n'\bS_n\bU_n$ is diagonal and $\bD_n$ is an arbitrary diagonal matrix. Let $\bU_n = [\bu_{n,1},\bu_{n,2},\dots \bu_{n,p_n}]$.  Then we have
 \begin{align} 
\tilde{\eta}(\bfrh_n,\bfr_n) & = \frac{\tr(\bU_n \bD_n^{-1} \bU_n')^2}{\tr(\bfr_n^{-1})\tr(\bU_n\bD_n^{-2}\bU_n'\bfr_n)} \nonumber \\
& = \frac{\tr(\bD_n^{-1})^2}{\tr(\bfr_n^{-1})\tr(\bD_n^{-2}\bU_n' \bfr_n \bU_n)},\label{eq:eta-tilde-D}
 \end{align}
 where we have used the cyclic-permutation property of trace. 
 We show in Appendix~\ref{sec:eta-tilde-D} that $\tilde{\eta}(\bfrh_n,\bfr_n)$ is maximized when 
 \begin{equation} \label{eq:oracle-eivals}
(\bD_n)_{ii} = d^*_{n,i} :=  \buni' \bfr_n \buni.
 \end{equation}
 Thus, in terms of maximizing $\tilde{\eta}$ the shrinkage estimator $\bfrh_n$ is at most as good as $\bfrs_n \coloneqq \bU_n \bD^*_n  \bU_n'$,
 where
 \[
\bD^*_n = \mathrm{diag} (d^*_{n,1},d^*_{n,2},\dots d^*_{n,p_n}).
 \]
  We call $\bfrs_n$ 
 a \emph{shrinkage oracle}, or just \emph{oracle}.

 Let
 \[
	L_{p_n}(\bfrh_n, \bfr_n) = \frac{1}{p_n}\left\Vert \bfrh_n - \bfr_n \right\Vert_{\text{F}}^2,
	\]
where $\left\Vert \cdot \right\Vert_{\text{F}}$ denotes the Frobenius norm.
 The following defines estimators that are ``just as good'' as an oracle.
 \begin{definition} \label{def:s-consistency}
          Let $\bfrh_n:\mathbb{C}^{p_n\times n}\to\mathcal{P}_{p_n}$ and let $\bfrs_n$ be a sequence of shrinkage oracles.  Then we say that
	$\bfrh_n$ is \emph{oracle consistent} if $\bfrh_n$ is a shrinkage estimator and
	\begin{equation} \label{eq:frob-consistency}
		L_{p_n}(\bfrh_n,\bfrs_n) \toprob 0
		\end{equation}
	as $n\to\infty$.
	\end{definition}
 
In Appendix~\ref{sec:lw-consistency}, we give a constructive proof of the following theorem using the work of \cite{ledoit2011eigenvectors,ledoit2017direct}.  We note that the ``spiked'' assumption is reasonable in STAP since interference covariances are often modeled as ``low-rank plus noise'' and their smallest eigenvalues are often assumed known \cite{steiner2000fast}.

\begin{thm}\label{thm:S-consistency-1}
Assume the ``spiked'' model of Johnstone \cite{johnstone2001distribution}: $R_n = \mathrm{diag}(\tau_{n,1}, \tau_{n,2}, \dots, \tau_{n,p_n})$ and all but the largest $r$ eigenvalues are 1, where $r$ is independent of $n$.  Further, suppose the largest $r$ eigenvalues are fixed and independent of $n$.  Then there exists an oracle-consistent estimator.
 \end{thm}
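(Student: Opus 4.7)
The first step is a reduction. Since any shrinkage estimator $\bfrh_n = \bU_n \bD_n \bU_n'$ and the oracle $\bfrs_n = \bU_n \bD^*_n \bU_n'$ share the sample eigenbasis $\bU_n$, the normalized Frobenius loss collapses to
\[
L_{p_n}(\bfrh_n, \bfrs_n) = \frac{1}{p_n}\sum_{i=1}^{p_n}\bigl(\hat d_{n,i} - d^*_{n,i}\bigr)^2,
\]
where $d^*_{n,i} = \buni' \bfr_n \buni$. The task is therefore to construct positive, uniformly bounded, uniformly bounded-below entries $\hat d_{n,i}$ whose mean-square distance from the unobservable $d^*_{n,i}$ tends to zero in probability; Definition~1(ii) and oracle consistency will then follow together.

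The natural candidate, matching the cited references, is the nonlinear shrinkage estimator $\hat d_{n,i} = \hat\varphi_n(\lamni)$, where $\lamni$ are the sample eigenvalues of $\bS_n$ and $\hat\varphi_n$ is the data-driven shrinkage function of Ledoit-Wolf 2017. The plan proceeds by a triangle-inequality split $(\hat d_{n,i} - d^*_{n,i})^2 \le 2(\hat d_{n,i} - \varphi(\lamni))^2 + 2(\varphi(\lamni) - d^*_{n,i})^2$ via a deterministic limit function $\varphi$. First, Ledoit-P\'ech\'e (2011) provides such a $\varphi$, computable from the Stieltjes transform of the limiting spectral distribution of $\bS_n$, for which $p_n^{-1}\sum_i (d^*_{n,i} - \varphi(\lamni))^2 \to 0$ almost surely. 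Second, Ledoit-Wolf 2017 supplies $\hat\varphi_n$ with $\sup_i|\hat\varphi_n(\lamni) - \varphi(\lamni)|$ vanishing in probability on the support of the bulk. Combining the two parts yields $L_{p_n}(\bfrh_n, \bfrs_n) \to 0$ in probability, and standard uniform bounds on $\hat\varphi_n$ over the bulk spectrum give Definition~1(ii).

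The main obstacle is the $r$ spike sample eigenvalues, which under the BBP transition leave the bulk support and are not directly covered by the cleanest form of the Ledoit-P\'ech\'e formula. I would handle them separately: each spike sample eigenvalue converges to a known deterministic limit depending on the corresponding $\tau_{n,k}$, and the alignments $|\buni'\bv_k|^2$ (where $\bv_k$ is the $k$-th population spike eigenvector) converge to explicit BBP overlaps, so the corresponding $d^*_{n,i}$ have closed-form limits that $\hat d_{n,i}$ can be patched to match. Since $r$ is fixed while $p_n\to\infty$, the crude bound $(\hat d_{n,i} - d^*_{n,i})^2 \le 4M^2$ already forces the spike indices to contribute at most $O(1/p_n)$ to $L_{p_n}$, so they are automatically negligible.

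As a sanity check that existence holds even before invoking the Ledoit-P\'ech\'e/Ledoit-Wolf machinery, the trivial choice $\bfrh_n \equiv \mathbf{I}_{p_n}$ already suffices: $\bfr_n - \mathbf{I}$ is positive semidefinite with operator norm at most $M-1$ and trace $\sum_{k=1}^r(\tau_{n,k}-1) = O(r)$, so the inequality $(d^*_{n,i}-1)^2 \le (M-1)(d^*_{n,i}-1)$ gives $\sum_i (d^*_{n,i}-1)^2 \le (M-1)\,\tr(\bfr_n-\mathbf{I}) = O(r)$ and hence $L_{p_n}(\bfrh_n, \bfrs_n) = O(r/p_n) \to 0$ deterministically. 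The Ledoit-Wolf construction is preferable in the appendix because it produces a data-adaptive, practical estimator that is genuinely competitive in Theorems~\ref{thm:nsnr-consistent}-\ref{thm:pd}.
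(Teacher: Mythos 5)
Your proposal does prove the theorem as stated, but the part that actually closes the argument is your ``sanity check,'' and it is a genuinely different and far more elementary route than the paper's. Under the spiked assumption with $r$ and the spike values fixed and $\lambda_{\min}(\bfr_n)=1$, $\|\bfr_n\|=M$, your bound $\sum_i (d^*_{n,i}-1)^2 \le (M-1)\,\tr(\bfr_n-\mathbf{I}_{p_n}) \le r(M-1)^2$ is correct (it uses only $0\le d^*_{n,i}-1\le M-1$ and $\sum_i \bu_{n,i}'(\bfr_n-\mathbf{I}_{p_n})\bu_{n,i}=\tr(\bfr_n-\mathbf{I}_{p_n})$), so $L_{p_n}(\mathbf{I}_{p_n},\bfrs_n)=O(r/p_n)\to 0$ deterministically, and $\mathbf{I}_{p_n}$ trivially satisfies Definition~1. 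That settles existence with no random matrix theory at all; what it buys is brevity, and what it gives up is exactly what the paper's Appendix~C is after: a data-adaptive estimator (the LWD estimator used in the simulations and implicitly in the later performance comparisons), not just a witness to existence.

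Your main route is essentially the paper's (Ledoit--P\'ech\'e limits plus the Ledoit--Wolf 2017 direct estimator), but as sketched it has a gap at the first leg of your triangle inequality: Ledoit--P\'ech\'e does not directly give $p_n^{-1}\sum_i\bigl(d^*_{n,i}-\varphi(\lambda_{n,i})\bigr)^2\to 0$; their Theorems~1.3--1.4 give almost-sure convergence of \emph{cumulative} weighted spectral functions such as $p_n^{-1}\sum_i \bu_{n,i}'\bfr_n\bu_{n,i}\mathbf{1}(\lambda_{n,i}\le\lambda)$. Converting those distributional limits into the needed Frobenius-type statement is precisely the technical content of the paper's proof: it shows that in the spiked model the limit density $\delta$ equals $1$ $F$-a.e., controls the aggregate spike eigenvector overlaps via their Theorem~1.3, clips the shrunken eigenvalues to $[1,\lambda_{n,p_n}]$ to get Definition~1(ii) and to tame possible infinities of $\delta$, and isotonizes with PAV together with a stability lemma. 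Likewise, the Ledoit--Wolf uniform consistency you invoke is what the paper uses, but the behavior at the separated (BBP) spike eigenvalues is handled there by clipping and the $O(r/p_n)$ negligibility you also note, rather than by patching in closed-form BBP limits. So if you intend the constructive route, you would need to supply those steps; if you intend only the existence claim, your identity-matrix argument already suffices.
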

 
 \begin{rem}
We have assumed that the columns of $
\bX_n$ are Gaussian-distributed for ease of exposition, but by \cite{ledoit2011eigenvectors,ledoit2017direct}, the result is much more distribution-free.  Indeed, it is only necessary to assume that the components of $\bfr_n^{-1/2}\bX_n$ have finite absolute central sixteenth moment bounded by a constant independent of $n$ and $p_n$ \cite{ledoit2017direct}.  This moment condition, in turn, can likely be relaxed to the much more lenient finite fourth moment assumption that is common in random matrix theory. 
\end{rem}

Any oracle consistent estimator is optimal in the following sense:

 \begin{thm} \label{thm:nsnr-consistent}
Let $\bfrh_n$ be oracle consistent.
Then
\begin{equation*}
\left \vert \eta(\bs_n, \bfrh_n, \bfr_n) - \eta(\bs_n, \bfrs_n, \bfr_n) \right \vert \toprob 0,
\end{equation*}
as $n\to\infty$.
\end{thm}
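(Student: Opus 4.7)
\emph{Overall plan.} I would combine Lemma~\ref{lem:trace-approx-eta}, applied separately to $\bfrh_n$ and to $\bfrs_n$, with a perturbation estimate on the deterministic trace-proxy $\tilde\eta$ of \eqref{eq:eta-tilde}. This reduces the theorem, by the triangle inequality, to the statement $|\tilde\eta(\bfrh_n,\bfr_n) - \tilde\eta(\bfrs_n,\bfr_n)| \toprob 0$, which I prove by exploiting the Frobenius-norm closeness of $\bfrh_n$ and $\bfrs_n$. Before applying Lemma~\ref{lem:trace-approx-eta} to $\bfrs_n$, I would verify that $\bfrs_n$ is itself a shrinkage estimator per Definition~\ref{def:shrinkage}: condition~(i) holds by construction, and condition~(ii) holds because each eigenvalue $d^*_{n,i} = \buni'\bfr_n\buni$ of $\bfrs_n$ lies in $[1, M]$ (as $\buni$ is unit-norm and the spectrum of $\bfr_n$ is contained in $[1, M]$), forcing $\lv \bfrs_n \rv \le M$ and $\lv \bfrs_n^{-1} \rv \le 1$ deterministically.

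\emph{Trace perturbation bounds.} Writing $\tilde\eta$ as a ratio of normalized traces, the crux is to show that the two occurrences of $\bfrh_n$ in the numerator and denominator of \eqref{eq:eta-tilde} can be replaced by $\bfrs_n$ with vanishing error. For the linear trace, the resolvent identity $\bfrh_n^{-1} - \bfrs_n^{-1} = \bfrh_n^{-1}(\bfrs_n - \bfrh_n)\bfrs_n^{-1}$, combined with the Hilbert--Schmidt Cauchy--Schwarz inequality $|\tr(AB)| \le \lv A \rv_F \lv B \rv_F$, the submultiplicativity $\lv AB \rv_F \le \lv A \rv \lv B \rv_F$, and the bound $\lv X \rv_F \le \sqrt{p_n}\,\lv X \rv$, yields
\begin{equation*}
\tfrac{1}{p_n} \labs \tr(\bfrh_n^{-1} - \bfrs_n^{-1}) \rabs \le \lv \bfrh_n^{-1} \rv\,\lv \bfrs_n^{-1} \rv\,\sqrt{L_{p_n}(\bfrh_n, \bfrs_n)}.
\end{equation*}
For the weighted trace, the analogous identity $\bfrh_n^{-2} - \bfrs_n^{-2} = \bfrh_n^{-2}(\bfrs_n - \bfrh_n)\bfrs_n^{-1} + \bfrh_n^{-1}(\bfrs_n - \bfrh_n)\bfrs_n^{-2}$ and the uniform bound $\lv \bfr_n \rv \le M$ yield an estimate of the same order $\sqrt{L_{p_n}(\bfrh_n, \bfrs_n)}$. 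Definition~\ref{def:shrinkage}(ii) furnishes the a.s.\ boundedness of $\lv \bfrh_n^{-1} \rv$ (and $\lv \bfrh_n \rv$), making the prefactors a.s.\ bounded.

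\emph{Ratio continuity, conclusion, and main obstacle.} To pass from vanishing trace differences to a vanishing $\tilde\eta$ difference, I need the denominators $p_n^{-1}\tr(\bfr_n^{-1})$, $p_n^{-1}\tr(\bfrh_n^{-2}\bfr_n)$, and $p_n^{-1}\tr(\bfrs_n^{-2}\bfr_n)$ to be a.s.\ bounded above and bounded away from $0$. The first lies in $[1/M, 1]$ by the spectrum of $\bfr_n$; the third is $\ge M^{-2}$ by the eigenvalue bracket of $\bfrs_n$; and the second is $\ge \lv \bfrh_n \rv^{-2}$, a.s.\ bounded below by Definition~\ref{def:shrinkage}(ii). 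Uniform upper bounds follow similarly, confining the numerator-denominator triple $(a,b,c)$ to an a.s.\ compact subset of $(0,\infty)^3$ on which the map $(a,b,c) \mapsto a^2/(bc)$ is Lipschitz; the Step~2 bounds then give $|\tilde\eta(\bfrh_n, \bfr_n) - \tilde\eta(\bfrs_n, \bfr_n)| \toprob 0$, completing the reduction. The main obstacle is bookkeeping: the factor $\sqrt{p_n}$ coming from $\lv \bfrh_n - \bfrs_n \rv_F = \sqrt{p_n L_{p_n}(\bfrh_n, \bfrs_n)}$ must exactly cancel the $\sqrt{p_n}$ produced by $\lv \cdot \rv_F \le \sqrt{p_n}\,\lv \cdot \rv$ in each Cauchy--Schwarz step, so that the final bound is $O(\sqrt{L_{p_n}})$ with no residual dimension factor; getting those two factors aligned at every stage and maintaining a.s.\ uniform operator-norm control on $\bfrh_n$, $\bfrs_n$, and their inverses is the delicate piece that makes the argument close.
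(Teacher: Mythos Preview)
Your proposal is correct and follows essentially the same route as the paper: reduce via Lemma~\ref{lem:trace-approx-eta} (applied to both $\bfrh_n$ and $\bfrs_n$) to $|\tilde\eta(\bfrh_n,\bfr_n)-\tilde\eta(\bfrs_n,\bfr_n)|\toprob 0$, control the normalized trace differences $p_n^{-1}\tr(\bfrh_n^{-1})-p_n^{-1}\tr(\bfrs_n^{-1})$ and $p_n^{-1}\tr(\bfrh_n^{-2}\bfr_n)-p_n^{-1}\tr(\bfrs_n^{-2}\bfr_n)$, and conclude by continuity of the ratio map (the paper packages this last step as Lemma~\ref{lem:quotients}). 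Your resolvent-identity/Cauchy--Schwarz estimates in fact supply the content of the paper's Lemma~\ref{lem:ab-lem}, which is stated there without proof; the only wording quibble is that $\bfrs_n$ depends on $\bfr_n$ and so is not literally a shrinkage estimator in the sense of Definition~\ref{def:shrinkage} (which requires a map $\mathbb{C}^{p_n\times n}\to\mathcal{P}_{p_n}$)---the paper handles this by invoking the hypotheses of Lemma~\ref{lem:trace} (independence from $\bs_n$ plus the norm bounds you checked) directly rather than Definition~\ref{def:shrinkage}.
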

\begin{proof} See Appendix~\ref{sec:ab-lemma}.

\end{proof}


In the next Section we investigate conditional false-alarm
and detection probabilities of the filter formed from an oracle consistent estimator $\bfrh_n$.

\section{Performance Analysis of Oracle Consistent Estimators} \label{sec:performance}
In this Section, we derive analytical asymptotically consistent performance estimates for the detector $T_n = T(\bs_n,\bfrh_n, \bx)$ formed
from an oracle consistent estimator $\bfrh_n$.  As in Section~II we assume $\bx$ is Gaussian for convenience, but this is certainly not necessary: the selfsame results hold regardless of distribution.
%

In the rest of this section, the key lemma will be the following.
\begin{lem} \label{lem:limit-of-xi}
If $\bfrh_n$ is oracle consistent, then
\[
	\xi_n := \xi(\bs_n, \bfrh_n, \bfr_n) \toprob 1
	\]
as $n\to\infty$, where
\[
	\xi(\bs, \bfrh, \bfr)\coloneqq \frac{\bs'\bfrh^{-1}\bfr\bfrh^{-1}\bs}{\bs'\bfrh^{-1}\bs}.
	\]
\end{lem}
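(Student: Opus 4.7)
The plan is to compare $\xi_n$ with the oracle analog $\xi^*_n := \xi(\bs_n, \bfrs_n, \bfr_n)$ and to split the argument into two stages: (a) $\xi^*_n \toprob 1$, and (b) $\xi_n - \xi^*_n \toprob 0$. Writing $\xi_n = N_n/D_n$ with $N_n := \bs_n' \bfrh_n^{-1}\bfr_n\bfrh_n^{-1}\bs_n$ and $D_n := \bs_n'\bfrh_n^{-1}\bs_n$, and defining $N^*_n, D^*_n$ analogously with $\bfrh_n$ replaced by $\bfrs_n$, both denominators will be almost surely bounded below by positive constants (since $\|\bfrh_n\|$ is controlled by Definition~\ref{def:shrinkage}(ii) and $\|\bfrs_n\| \le M$ follows from $d^*_{n,i} \le \|\bfr_n\| \le M$), so it suffices to show that $N_n - N^*_n$, $D_n - D^*_n$, and $N^*_n - D^*_n$ all tend to $0$ in probability.

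For stage (a), I would condition on $\bX_n$, freezing $\bU_n$ and $\bD_n^*$. By unitary invariance and the independence of $\bs_n$ from the training data, $\bs_n$ remains uniform on the unit sphere in $\mathbb{C}^{p_n}$, so standard moment formulas for quadratic forms in uniform unit vectors give
\begin{equation*}
E[N^*_n\mid \bX_n] = \tr(\bfrs_n^{-1}\bfr_n\bfrs_n^{-1})/p_n, \quad E[D^*_n\mid \bX_n] = \tr(\bfrs_n^{-1})/p_n,
\end{equation*}
with conditional variances of order $\|\bfrs_n^{-1}\bfr_n\bfrs_n^{-1}\|_F^2/p_n^2$ and $\|\bfrs_n^{-1}\|_F^2/p_n^2$ respectively, both $O(1/p_n)$ since $1 \le d^*_{n,i} \le M$. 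The key structural identity, from the defining property $d^*_{n,i}=(\bU_n'\bfr_n\bU_n)_{ii}$ of the oracle in \eqref{eq:oracle-eivals}, is
\[
\tr(\bfrs_n^{-1}\bfr_n\bfrs_n^{-1}) = \tr\bigl(\bD_n^{*-2}\, \bU_n' \bfr_n \bU_n\bigr) = \sum_i \frac{(\bU_n'\bfr_n\bU_n)_{ii}}{d_{n,i}^{*2}} = \sum_i \frac{1}{d_{n,i}^*} = \tr(\bfrs_n^{-1}),
\]
so the conditional means agree exactly. Combined with $D^*_n \ge 1/M$ a.s., this yields $\xi^*_n \toprob 1$.

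For stage (b), I would use the resolvent identity $\bfrh_n^{-1} - \bfrs_n^{-1} = \bfrh_n^{-1}(\bfrs_n - \bfrh_n)\bfrs_n^{-1}$ together with Definition~\ref{def:shrinkage}(ii) to obtain $\|\bfrh_n^{-1} - \bfrs_n^{-1}\|_F = O_p(\|\bfrh_n - \bfrs_n\|_F) = o_p(\sqrt{p_n})$ by oracle consistency. The Cauchy--Schwarz bound $|\tr A| \le \sqrt{p_n}\,\|A\|_F$ then makes the conditional mean of $D_n - D^*_n = \bs_n'(\bfrh_n^{-1}-\bfrs_n^{-1})\bs_n$ at most $o_p(1)$ and its conditional variance at most $o_p(1/p_n)$, so $D_n - D^*_n \toprob 0$. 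A parallel product-rule expansion of $\bfrh_n^{-1}\bfr_n\bfrh_n^{-1} - \bfrs_n^{-1}\bfr_n\bfrs_n^{-1}$, together with $\|\bfr_n\| \le M$, gives $N_n - N^*_n \toprob 0$. The main obstacle is that oracle consistency supplies only a Frobenius-norm bound scaled by $1/\sqrt{p_n}$, not operator-norm closeness; but this is exactly calibrated so that the $\sqrt{p_n}$ Cauchy--Schwarz loss in the mean and the $1/p_n^2$ gain in the variance together absorb the permitted $o_p(\sqrt{p_n})$ growth, while the oracle trace identity in stage (a) is what pins the limit to $1$ rather than to a data-dependent constant.
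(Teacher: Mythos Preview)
Your proposal is correct. Both your argument and the paper's hinge on the same structural identity
\[
\tr(\bfrs_n^{-1}\bfr_n\bfrs_n^{-1}) = \sum_i d_{n,i}^{*-2}\,\bu_{n,i}'\bfr_n\bu_{n,i} = \sum_i d_{n,i}^{*-1} = \tr(\bfrs_n^{-1}),
\]
which is exactly what forces the limit to be $1$. The difference is in the surrounding machinery. The paper first replaces \emph{both} quadratic forms in $\xi_n$ by normalized traces $p_n^{-1}\tr(\bfrh_n^{-2}\bfr_n)$ and $p_n^{-1}\tr(\bfrh_n^{-1})$ via its exponential concentration lemma (Lemma~\ref{lem:trace-approx}) plus Borel--Cantelli, then passes from $\bfrh_n$ to $\bfrs_n$ at the trace level using Lemma~\ref{lem:ab-lem}, and finally applies the identity above. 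You instead work directly at the quadratic-form level: you show $\xi_n^* \toprob 1$ by second-moment (Chebyshev) concentration of $N_n^*$ and $D_n^*$ around their common conditional mean, and separately show $\xi_n - \xi_n^* \toprob 0$ via the resolvent identity and the Frobenius control from oracle consistency, again with second-moment concentration in $\bs_n$. Your route is more elementary and self-contained (no exponential tail bounds, no appeal to the auxiliary trace-transfer Lemma~\ref{lem:ab-lem}); the paper's route is shorter on the page because it reuses lemmas already proved for Theorem~\ref{thm:nsnr-consistent}. Both are valid, and your observation that the $o_p(\sqrt{p_n})$ Frobenius scale from Definition~\ref{def:s-consistency} is precisely calibrated to survive the $\sqrt{p_n}$ Cauchy--Schwarz loss in the trace and the $1/p_n^2$ gain in the variance is the right way to see why oracle consistency is strong enough here.
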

\begin{proof}
See Appendix~\ref{sec:limit-of-xi}.
\end{proof}

\subsection{False-Alarm Rate}
The conditional false-alarm rate of $T_n$ given the training data $\bX_n$ and a random steering vector $\bs_n$ is the random variable given by
\begin{align*}
	& p_{\text{fa}}^n(\tau)  \coloneqq
	 \Pr\left[  T_n >\tau \mid \fH_0^n, \bs_n, \bX_n \right].
	\end{align*}
	
In the following theorem, we present an asymptotically consistent estimate of this rate that is
independent of $\bX_n$, $\bs_n$, and the unknown sequence of matrices $\{ \bfr_n \}_{n = 1}^\infty$.   This means the detector has the extremely useful \emph{CFAR property}, like its cousin that is based on sample covariance \cite{robey1992cfar}.  However, unlike its cousin this detector's conditional false-alarm rate converges to a limit that is both \emph{non-random} and \emph{closed-form}.  This means that this test is asymptotically as good as the GLRT of \cite{robey1992cfar} in the sense that the limiting false-alarm rate can be set exactly using the threshold alone---a highly desirable property from a statistical standpoint.
\begin{thm} \label{thm:pfa}
	If $\bfrh_n$ is oracle consistent, then
	\[
		p_{\textup{fa}}^n(\tau) \toprob e^{-\tau},
	\]
	as $n\to\infty$. 
	\end{thm}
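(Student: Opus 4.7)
The plan is to reduce the false-alarm rate to a closed-form function of $\xi_n$ and then apply Lemma~\ref{lem:limit-of-xi} via the continuous mapping theorem. The key observation is that, conditional on $(\bX_n, \bs_n)$, the estimator $\bfrh_n$ is deterministic, so under $\fH_0^n$ the scalar $\bs_n'\bfrh_n^{-1}\bx$ is a circularly symmetric complex Gaussian with mean $0$ and variance $\bs_n'\bfrh_n^{-1}\bfr_n\bfrh_n^{-1}\bs_n$. Therefore $|\bs_n'\bfrh_n^{-1}\bx|^2$ is, conditionally, an exponential random variable scaled by this variance, and
\[
T_n \;=\; \frac{|\bs_n'\bfrh_n^{-1}\bx|^2}{\bs_n'\bfrh_n^{-1}\bs_n} \;=\; \xi_n\, E,
\]
where $E$ is a standard (unit mean) exponential random variable independent of $(\bX_n, \bs_n)$.

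First I would use the above to compute
\[
p_{\textup{fa}}^n(\tau) \;=\; \Pr\bigl[\,\xi_n E > \tau \,\big|\, \bX_n, \bs_n \bigr] \;=\; e^{-\tau/\xi_n},
\]
which is a measurable function of $\xi_n$ alone. Next, by Lemma~\ref{lem:limit-of-xi}, $\xi_n \toprob 1$. Since the map $x \mapsto e^{-\tau/x}$ is continuous at $x = 1$ (and $\xi_n > 0$ almost surely because $\bfrh_n^{-1}$ is positive definite), the continuous mapping theorem yields $e^{-\tau/\xi_n} \toprob e^{-\tau}$, which is the desired conclusion.

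The only step that requires any care is the identification of the conditional distribution of $T_n$: one must justify interchanging the role of the random $\bfrh_n$ and the Gaussian $\bx$ (valid by independence of $\bx$ from the training data and from $\bs_n$) and confirm that in the complex Gaussian convention used here, $|Z|^2$ for $Z \sim \mathcal{CN}(0,\sigma^2)$ is exponential with mean $\sigma^2$. This matches the scaling already used in the paper when it derived $p_{\text{fa}} = e^{-\tau/\xi}$ in the oracle discussion of Section~\ref{sec:background}, so no new probabilistic work is needed; the proof is essentially a one-line invocation of Lemma~\ref{lem:limit-of-xi}. There is no substantive obstacle, as all the heavy lifting was done in establishing $\xi_n \toprob 1$.
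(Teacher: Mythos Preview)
Your proposal is correct and follows essentially the same approach as the paper: compute the conditional false-alarm rate in closed form as $e^{-\tau/\xi_n}$, invoke Lemma~\ref{lem:limit-of-xi} to get $\xi_n\toprob 1$, and conclude via the continuous mapping theorem. The only difference is cosmetic---you spell out the exponential/chi-square identification a bit more explicitly, whereas the paper simply cites the derivation already done in Section~\ref{sec:background}.
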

\begin{proof}
	Fix $\tau \in \mathbb{R}$. The statistic $T_n$ is a scaled complex chi-square random variable, so
	\begin{equation} \label{eq:fa-rate-nonasymptotic}
	p_{\text{fa}}^n(\tau) = \exp(-\tau/\xi_n).
	\end{equation}
The function $h(x) = e^{-\tau/x}$ is continuous at every point $x \in \mathbb{R}_+$. By Lemma~\ref{lem:limit-of-xi},
	we have  
	\[
		\xi_n \toprob 1
	\]
	as $n\to\infty$.
	Continuous functions preserve convergence in probability \cite{mann1943stochastic}, hence
	\[
		h(\xi_n) \toprob h(1)
	\]
	as $n\to\infty$
	That is, 
	\[
		\exp(-\tau/\xi_n) \toprob e^{-\tau}
	\]
	as $n\to\infty$.
	By \eqref{eq:fa-rate-nonasymptotic}, this is the desired result.
	\end{proof}

In the next section we will obtain a similar result relevant to the detection rate.

\subsection{Detection Rate}

%
In this section, we show how to estimate the conditional detection rate of $T_n$ given $\bs_n$ and $\bX_n$:
\begin{align*}
	& p_{\text{d}}^n(\tau) 
	\coloneqq \Pr\left[  T_n >\tau \mid \fH_1^n, \bs_n, \bX_n \right].
	\end{align*}
It follows from the distribution of $T_n$ that $p_{\text{d}}^n(\tau) = Q(\tau/\xi_n, \nu_n^2)$.
Ideally, then, one thing we would like to know is $\nu_n$.  However, complications such as unknown radar cross section make it necessary to estimate this quantity.  In Lemma~\ref{thm:sinr} below, we provide just such an estimate.  This estimate uses the only information we have about $\nu_n$---namely, the test datum $\bx$.
 

Before we state the lemma, let us introduce a bit of terminology. For any positive definite matrix $\mathbf{P}$ and any properly sized column vectors $\bs$ and $\bx$, let 
\begin{equation} \label{eq:noncentrality-approx}
\hat{\nu}(\bs, \mathbf{P}, \bx)=\frac{|\bs'\mathbf{P}^{-1}\bx|}{(\bs'\mathbf{P}^{-1}\bs)^{1/2}}.
\end{equation}
Further, if $X_n$ and $Y_n$ are random variables
we will say that $X_n$ is  \emph{asymptotically less than or equal to} $Y_n$, denoted $X_n \lesssim Y_n$, iff 
$\max\{X_n-Y_n,0\} \toprob 0$ as $n\to\infty$.  
``Asymptotically greater than or equal to'' is defined similarly.
The following lemma states that if $\bfrh_n$ is oracle consistent, $\hat{\nu}_n(\bx)\coloneqq \hat{\nu}(\bs_n, \bfrh_n, \bx)$ is  a reasonable estimator of $\nu_n$.
\begin{lem} \label{thm:sinr}
Suppose $\bfrh_n$ is oracle consistent. Then 
\[
\Pr\left[\left| \hat{\nu}_n(\bx) -\nu_n \right| \ge t \mid \fH_1^n, \bs_n, \bX_n \right] \lesssim e^{-t^2}
\]
as $n\to\infty$.
\end{lem}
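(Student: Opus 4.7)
The plan is to reduce $|\hat{\nu}_n(\bx)-\nu_n|$ to the absolute value of a conditionally complex Gaussian plus a small deterministic error, and then apply the standard Gaussian tail bound.

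Under $\fH_1^n$ we can write $\bx = a\bs_n + \bz$ with $\bz \sim \mathcal{CN}(0,\bfr_n)$ independent of $(\bs_n,\bX_n)$. Substituting into the definition \eqref{eq:noncentrality-approx} of $\hat{\nu}_n(\bx)$ and dividing through by $(\bs_n'\bfrh_n^{-1}\bs_n)^{1/2}$, I obtain
\[
\hat{\nu}_n(\bx) = |\mu_n + V|, \qquad \mu_n \coloneqq a(\bs_n'\bfrh_n^{-1}\bs_n)^{1/2},
\]
where, conditionally on $(\bs_n,\bX_n)$, the scalar $V = \bs_n'\bfrh_n^{-1}\bz/(\bs_n'\bfrh_n^{-1}\bs_n)^{1/2}$ is $\mathcal{CN}(0,\xi_n)$, with $\xi_n$ as in Lemma~\ref{lem:limit-of-xi}. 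A short calculation from \eqref{eq:rho-squared} and the definition of $\xi_n$ gives the exact identity $|\mu_n| = \nu_n\xi_n^{1/2}$, so the reverse triangle inequality yields
\[
|\hat{\nu}_n(\bx) - \nu_n| \le \bigl||\mu_n+V|-|\mu_n|\bigr| + \bigl||\mu_n|-\nu_n\bigr| \le |V| + \delta_n, \quad \delta_n \coloneqq \nu_n\bigl|\xi_n^{1/2}-1\bigr|.
\]

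Next, since $|V|^2/\xi_n$ is conditionally exponential with mean $1$, the standard complex Gaussian tail bound gives, for every $t > 0$,
\[
\Pr\bigl[|\hat{\nu}_n(\bx)-\nu_n|\ge t \,\big|\, \fH_1^n,\bs_n,\bX_n\bigr] \le \exp\!\left(-\frac{(t-\delta_n)_+^2}{\xi_n}\right),
\]
with the bound being trivially $\le 1$ when $\delta_n\ge t$. It therefore suffices to show $\delta_n \toprob 0$ and $\xi_n\toprob 1$; then the right-hand side converges in probability to $e^{-t^2}$, which by the definition of $\lesssim$ is exactly the desired conclusion.

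The convergence $\xi_n \toprob 1$ is Lemma~\ref{lem:limit-of-xi}. For $\delta_n$, I would use boundedness of $\nu_n$: since $\|\bs_n\|=1$ and $\sup_n\|\bfrh_n^{-1}\|$ is a.s.\ bounded by Definition~\ref{def:shrinkage}(ii),
\[
\nu_n^2 = \frac{|a|^2\bs_n'\bfrh_n^{-1}\bs_n}{\xi_n} \le \frac{|a|^2\|\bfrh_n^{-1}\|}{\xi_n}
\]
is almost surely bounded once $n$ is large enough that $\xi_n$ is bounded away from zero (which follows from $\xi_n\toprob 1$). Combined with $|\xi_n^{1/2}-1|\toprob 0$, this forces $\delta_n \toprob 0$, completing the plan. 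The main obstacle I expect is the last step: making sure $\nu_n$ remains bounded in probability uniformly in $n$, which is why Definition~\ref{def:shrinkage}(ii) is needed rather than only the spectral-decomposition property (i).
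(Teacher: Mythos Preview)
Your argument is correct and noticeably more direct than the paper's. The paper routes the proof through the oracle $\bfrs_n$: it defines $\mu_n=\nu(\bs_n,\bfrs_n,\bfr_n)$ and $\hat{\mu}_n(\bx)=|\bs_n'\bfr_n^{*-1}\bx|/(\bs_n'\bfr_n^{*-1}\bfr_n\bfr_n^{*-1}\bs_n)^{1/2}$, observes that for the oracle the relevant Gaussian has variance exactly $1$ (so the bound $\tilde{\pi}_n(t)\le e^{-t^2}$ is exact, non-asymptotic), and then spends two auxiliary lemmas proving $|\nu_n-\mu_n|\toprob 0$ and that $\epsilon_n(\bx)=|\hat{\nu}_n(\bx)-\hat{\mu}_n(\bx)|$ is small in the appropriate conditional sense. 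You bypass the oracle entirely by noticing the algebraic identity $|\mu_n|=\nu_n\xi_n^{1/2}$ for the $\bfrh_n$-based quantities, so the conditional variance of your Gaussian is $\xi_n$ rather than $1$, and Lemma~\ref{lem:limit-of-xi} then does all the asymptotic work at once. What the paper's detour buys is a clean non-asymptotic inequality at the oracle level; what your route buys is a much shorter proof that uses only Lemma~\ref{lem:limit-of-xi} and avoids the somewhat delicate Lemma~8 of the paper. One small tightening: rather than invoking $\xi_n\toprob 1$ to keep $\nu_n$ bounded, you can note directly that $\xi_n$ is a Rayleigh quotient of $\bfrh_n^{-1/2}\bfr_n\bfrh_n^{-1/2}$ and hence $\xi_n\ge 1/\lVert\bfrh_n\rVert$, so $\nu_n^2\le |a|^2\lVert\bfrh_n^{-1}\rVert\,\lVert\bfrh_n\rVert$ is almost surely bounded by Definition~\ref{def:shrinkage}(ii) alone.
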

\begin{proof} See Appendix~\ref{sec:pf-thm-sinr}.
\end{proof}

We note that the estimate above is essentially the same as the one in \cite{robinson2019optimal}, but $\hat{\nu}_n(\bx)$ is a significantly tighter estimator in practice than the one in that paper.

Roughly speaking, if we apply $Q(\tau, \cdot)$ to this result, we get the following characterization of the conditional detection probability $p_\text{d}^n(\tau)$ in terms of confidence intervals.

\begin{thm} \label{thm:pd}
Suppose $\bfrh_n$ is oracle consistent.  Let the ``confidence'' be $q\in [0,1)$, let $\tau \ge 0$, and let
\[
\hat{\nu}_{n\pm}(\bx,\myq) = \max\left\{0, \hat{\nu}_n(\bx)\pm\sqrt{\log\frac{1}{1-\myq}}\right\}.
\]
Then the probability that $p_{\textup{d}}^n(\tau)$ lies between $Q(\tau, \hat{\nu}_{n-}(\bx,\myq)^2)$ and $Q(\tau,\hat{\nu}_{n+} (\bx,\myq)^2)$ given  $\fH_1^n,\bs_n,$ and $\bX_n$ is asymptotically greater than or equal to $q$, as $n\to\infty$.
\end{thm}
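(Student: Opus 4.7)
The plan is to rewrite the event inside the stated conditional probability as a concentration event for $\hat{\nu}_n(\bx)$, to which Lemma~\ref{thm:sinr} then applies directly, while absorbing the gap between $p_{\text{d}}^n(\tau)=Q(\tau/\xi_n,\nu_n^2)$ and $Q(\tau,\nu_n^2)$ by using $\xi_n\toprob 1$ from Lemma~\ref{lem:limit-of-xi}.

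First I would rephrase the event. Since $\beta\mapsto Q(\tau,\beta)$ is strictly increasing (it is the CCDF of a noncentral chi-square in the noncentrality parameter), there is a unique $\tilde{\nu}_n\ge 0$ with $Q(\tau,\tilde{\nu}_n^2)=p_{\text{d}}^n(\tau)$. With $t=\sqrt{\log\frac{1}{1-q}}$, so that $\hat{\nu}_{n\pm}(\bx,q)=\max\{0,\hat{\nu}_n(\bx)\pm t\}$, strict monotonicity of $Q(\tau,\cdot)$ together with a short case analysis on whether $\hat{\nu}_n(\bx)\ge t$ or $\hat{\nu}_n(\bx)<t$ yields the equivalence
\begin{equation*}
\{Q(\tau,\hat{\nu}_{n-}^2)\le p_{\text{d}}^n(\tau)\le Q(\tau,\hat{\nu}_{n+}^2)\}\iff\{|\hat{\nu}_n(\bx)-\tilde{\nu}_n|\le t\}.
\end{equation*}

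Second I would show that $\tilde{\nu}_n-\nu_n\toprob 0$. Lemma~\ref{lem:limit-of-xi} gives $\xi_n\toprob 1$, so by joint continuity of $Q$ one has $|p_{\text{d}}^n(\tau)-Q(\tau,\nu_n^2)|\toprob 0$ as long as $\nu_n^2$ stays in a bounded set almost surely. The latter follows from Definition~\ref{def:shrinkage}(ii) and $\|\bfr_n\|\le M$: writing $\nu_n^2=|a|^2\,\bs_n'\bfrh_n^{-1}\bs_n/\xi_n$, one has $\bs_n'\bfrh_n^{-1}\bs_n\le \sup_n\|\bfrh_n^{-1}\|<\infty$ almost surely, and $\xi_n$ is almost surely bounded away from zero for large $n$ (again from Definition~\ref{def:shrinkage}(ii) together with $\|\bfr_n\|\le M$). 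Inverting the continuous, strictly increasing function $Q(\tau,\cdot)$ on the resulting compact range then transfers the closeness of $p_{\text{d}}^n$ and $Q(\tau,\nu_n^2)$ into $\tilde{\nu}_n-\nu_n\toprob 0$.

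Third I would combine the pieces via the triangle inequality: on the event $\{|\hat{\nu}_n(\bx)-\nu_n|\le t-|\tilde{\nu}_n-\nu_n|\}$, we have $|\hat{\nu}_n(\bx)-\tilde{\nu}_n|\le t$. Applying Lemma~\ref{thm:sinr} with the $\bx$-independent threshold $s_n:=t-|\tilde{\nu}_n-\nu_n|$ (a function of $\bs_n,\bX_n$ alone), the conditional probability of the target event is asymptotically $\gtrsim 1-e^{-s_n^2}$, and since $s_n\toprob t$ we obtain $\gtrsim 1-e^{-t^2}=q$, which is the claim. The main obstacle is the second step---confirming that inversion of $Q(\tau,\cdot)$ legitimately promotes $\xi_n\toprob 1$ to $\tilde{\nu}_n-\nu_n\toprob 0$, which rests on the a.s.\ boundedness argument above---together with the related subtlety in the third step of applying Lemma~\ref{thm:sinr} with a random, $\bx$-measurable threshold $s_n$ in place of the fixed constant $t$, which should be a uniform-in-threshold consequence of its proof.
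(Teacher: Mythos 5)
Your overall strategy is sound and, in substance, parallels the paper's own proof: both arguments reduce the claim to the concentration bound of Lemma~\ref{thm:sinr}, absorb the discrepancy between $Q_{\tau/\xi_n}$ and $Q_\tau$ via Lemma~\ref{lem:limit-of-xi} together with almost-sure boundedness of $\nu_n$ coming from Definition~\ref{def:shrinkage}(ii), and close with an $\epsilon$-slack that is sent to zero at the end. The bookkeeping differs: the paper never inverts $Q_\tau$ at $p_{\text{d}}^n(\tau)$; instead it compares $\pi_n$ with perturbed probabilities $\tilde{\pi}_n(\epsilon)$ using a Lipschitz constant for $Q_\tau^{-1}$ on an interval $(a,b)$ with $a>Q_\tau(0)$, whereas you define $\tilde{\nu}_n = Q_\tau^{-1}(p_{\text{d}}^n(\tau))$ and only need uniform continuity of the inverse on a compact range, which is a slightly cleaner route where it works.

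Two steps need repair, however. First, the existence of $\tilde{\nu}_n$ is asserted but not guaranteed: it requires $p_{\text{d}}^n(\tau) \ge Q(\tau,0) = e^{-\tau}$, and since $p_{\text{d}}^n(\tau) = Q(\tau/\xi_n,\nu_n^2)$ this can fail at finite $n$ when $\xi_n<1$ and $\nu_n$ is small. Your step 2 establishes only an upper bound on $\nu_n$; you also need the lower bound $\liminf_n \nu_n > c > 0$ almost surely (it follows from Definition~\ref{def:shrinkage}(ii), $\lambda_{\min}(\bfr_n)=1$, and $a\ne 0$, and the paper invokes exactly this constant $c$) so that $Q(\tau,\nu_n^2) \ge Q(\tau,c^2) > Q_\tau(0)$ and hence $\tilde{\nu}_n$ exists with probability tending to one; the exceptional event must then be absorbed into the asymptotics. (The degenerate case $\tau=0$, where $Q(\tau,\cdot)\equiv 1$ and strict monotonicity fails, should also be dispatched separately, though it is trivial.) Second, your application of Lemma~\ref{thm:sinr} with the random threshold $s_n = t - |\tilde{\nu}_n-\nu_n|$ is left resting on an unproved ``uniform-in-threshold'' strengthening of that lemma. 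No such strengthening is needed: fix $\epsilon>0$, note that $|\tilde{\nu}_n-\nu_n|$ is measurable with respect to $(\bs_n,\bX_n)$, restrict to the event $\{|\tilde{\nu}_n-\nu_n|\le\epsilon\}$ whose probability tends to one, apply Lemma~\ref{thm:sinr} with the deterministic threshold $t-\epsilon$ to get $\pi_n \gtrsim 1-e^{-(t-\epsilon)^2}$, and let $\epsilon\downarrow 0$ at the end---this is precisely the device the paper uses with its $\epsilon/L$ slack and the final bound on $M_n^*$. With these two patches your argument goes through.
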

\begin{proof} See Appendix~\ref{sec:pd}. 
\end{proof}

We finally note that all of the estimates contained in this section are \emph{bona fide} estimates, in the sense that they depend only on known quantities.

%
 
 \section{Simulations} \label{sec:simulations}
 
In this section we compare several popular covariance estimators to the estimator described in Appendix~C, which we call the Ledoit-Wolf Direct (LWD) estimator. 

\subsection{Alternative Estimators}

Below we list several popular covariance estimators arising in STAP.  To define all estimators, fix a sample $\bX_n$ of size $n$ whose columns have covariance $\bfr$ and let $\lambda_1\le  \dots\le \lambda_p$ be the eigenvalues of sample covariance $\bS_n=\bX_n\bX_n'/n$.

\subsubsection{DGJ} \cite{donoho2018optimal} In the PCA literature, several recent results have shown that sample eigenvalues and eigenvectors in the spiked model are biased in a predictable deterministic way from their population counterparts in the large-dimensional limit.  Donoho, Gavish, and Johnstone have used these biasing formulae to propose a shrinkage estimator that is asymptotically as close to $R_n$ as the oracle.  Such an estimator is an example of an oracle-consistent estimator, but it is only defined for $n\ge p$ at the moment and may require very large $n$ and $p_n$ to converge.

\subsubsection{Anderson-42}\cite{anderson1963asymptotic} This estimator assumes a spiked structure and that the rank $r$ is known.  In this case, an estimate of the smallest population eigenvalue is computed:
\[
\hat{\sigma}^2 = \frac{1}{p-r} \sum_{i=1}^{p-r} \lambda_i.
\]
The estimator is obtained by thresholding $\bS_n$ from below by $\hat{\sigma}^2$.  The only question is how to approximate $r$.  A choice made in the literature for the data set described in Subsection~\ref{subsec:comparison-to-oracle} is $r=42$ \cite{kang2014rank}.

\subsubsection{FML} \cite{steiner2000fast} Steiner and Gerlach's Fast Maximum Likelihood is the maximum-likelihood estimator subject to the constraint that the smallest eigenvalue is known.  In Subsection~\ref{subsec:comparison-to-oracle} and many places, this eigenvalue is noralized to $\sigma^2=1$.  The result is obtained by thresholding $\bS_n$ from below by $\sigma^2$.


\subsubsection{LW diagonal loading} For a given covariance $\bfr$, there is an oracle scaled convex 
combination of 
sample covariance and the identity, as described in Ledoit and Wolf's \cite{ledoit2004well}.
The \emph{bona fide} estimator described in that paper is an approximation
to the oracle linear combination that converges (in a sense in the quartic mean)
as $n,p_n \to\infty$.  
This estimator can be described as the Frobenius-norm optimal diagonal loading 
estimator.

\subsection{Comparison of Estimators' NSINR} \label{subsec:comparison-to-oracle}

In this subsection, we take the population covariance $\bfr$ to be the ideal covariance from range bin 1 of the KASSPER I \cite{guerci2006knowledge} 
data set (Knowledge-Aided Sensor Signal Processing and Expert Reasoning).  KASSPER I is a high-fidelity, physics-based simulation of
radar data collected by a multichannel array over multiple pulses. 
This data set is ideal for testing
the kind of detector under consideration
since truth covariances are included in the data set.  In these figures, $p=JP=352$ is fixed, where the number of antennas $J=11$ and the number of pulses $P=32$.  This covariance roughly conforms to the spiked model, but also stretches its limits of the spiked assumption, with over 42 population eigenvalues exceeding 1.


In Figures~\ref{fig:medians-raw} and \ref{fig:NSNR-thresh-unzoomed}, we plot the median NSINRs of the estimators above computed from 100 trials for each value of $n$ in the range $40,60,80\dots 500$. 
Here, the training matrices are Laplace distributed with covariance $\bfr$, meaning they are matrices of white Laplace noise colored by $\bfr$.
%
It is easy to see 
that the median performance of LWD
is almost indistinguishable from the median performance of the shrinkage oracle and that FML trails closely behind.  Where defined (i.e., when $n \ge 352$), DGJ also rivals LWD in terms of NSINR.


We note that LWD is only predicted to perform well for $\gamma\ne 1$.  In practice, inconsistencies and suboptimality could be encountered for $\gamma$ near to 1.  This was not an issue in our experiments, but more study is needed to understand whether it could be an issue for other covariances and values of $n$.

%

\subsection{Performance Predictions}

The performance predictions of Theorem~\ref{thm:pfa} and Theorem~\ref{thm:pd} hinge crucially on the convergence of $\xi_n$ to 1, as defined in Lemma~\ref{lem:limit-of-xi}.  Since it is difficult to show the dependence of $p_{\text{fa}}$ and $p_{\text{d}}$ on $n$ and $p$ as $n,p\to\infty$ by Monte Carlo simulation, we show instead the behavior of $\xi_n$ for the LWD, FML, and DGJ estimator for $\bfr$ equal to the same KASSPER covariance of the last subsection.  In Figure~\ref{fig:pfa}, we show the dependence of 100 trials of $\xi_n$ on $n$, plotting 10-90 percentiles.  As can be seen, the median values of $\xi_n$ for LWD are closer to 1 than FML's or DGJ's for all $n$ displayed.  That they are closer to 1 than FML's is to be expected: FML is not oracle-consistent.  That they are closer to 1 than DGJ's seems to indicate LWD converges to an oracle more quickly than DGJ does. 

\begin{figure}
	\begin{center}
		\includegraphics[trim={2.25cm 7cm 2.25cm 7cm},clip,scale=.475]{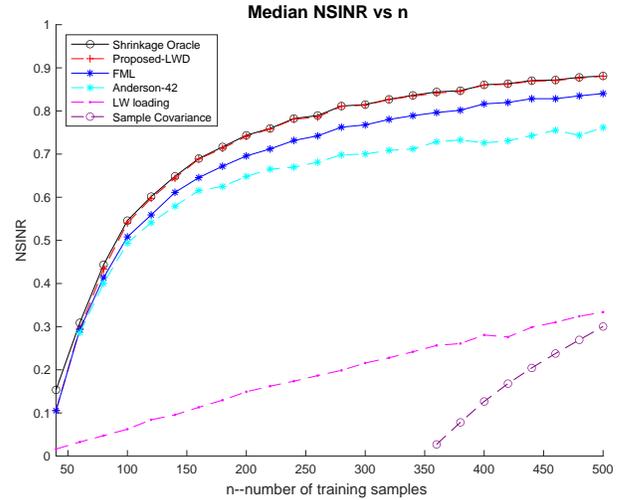}
		\end{center}
		\caption{\label{fig:medians-raw}
		A plot that shows nearly identical performance of LWD and the shrinkage oracle.
		}
\end{figure}

\begin{figure}
	\begin{center}
		\includegraphics[trim={2.25cm 7cm 2.25cm 7cm},scale=.475]{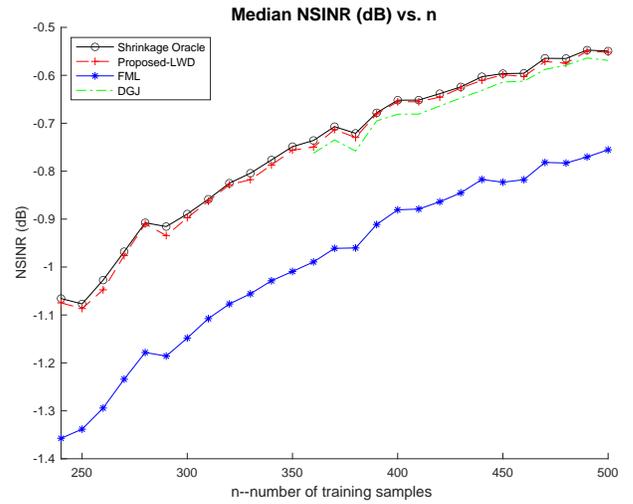}
		\end{center}
\caption{\label{fig:NSNR-thresh-unzoomed} A close-up of Figure~\ref{fig:medians-raw}, in decibels, including DGJ.}
\end{figure}


\begin{figure}
	\begin{center}
		\includegraphics[trim={2.25cm 7cm 2.25cm 7cm},scale=.475]{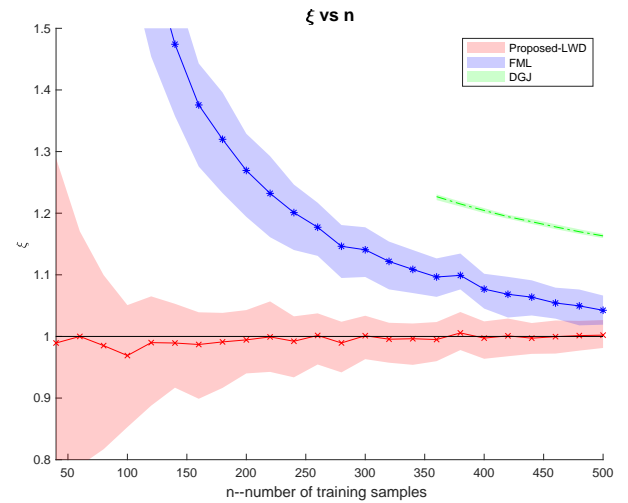}
		\end{center}
\caption{\label{fig:pfa} A plot showing $\xi$ for LWD, FML, and DGJ showing LWD closer to 1 on average for all $n$ displayed.}
\end{figure}

\section{Conclusion} \label{sec:conclusion}
In this paper we have proposed a new oracle-consistency condition (Definition~\ref{def:s-consistency}) for covariance estimators which enables the development of closed-form asymptotically consistent performance estimates for the corresponding adaptive matched filter (Theorems~\ref{thm:pfa} and \ref{thm:pd}) that depend only on given data, even for non-Gaussian interference statistics.  We have shown oracle consistent shrinkage estimators exist (Theorem~\ref{thm:S-consistency-1}) in some special situations and are detection-theoretic optimal among shrinkage estimators (Theorem~\ref{thm:nsnr-consistent}).  Further, we have shown in Section~\ref{sec:simulations} that the given example performs as expected in simulation.


Future work may include relaxing the spiked assumption, relaxing the assumption that the smallest population eigenvalue is known, dealing with the case of $\gamma\approx 1$, and pursuing the rate of convergence to the shrinkage oracle.

\appendices
\section{Proof of Lemma~\ref{lem:trace-approx-eta}} \label{sec:trace}

Throughout the appendices, convergence of random variables means convergence as $n\to\infty$.

It follows immediately from the definition of uniform convergence that if $x_n$ and $y_n\in \mathbb{R}^d$ satisfy $x_n - y_n \to 0$ and $f: \mathbb{R}^d \to \mathbb{R}$ is uniformly continuous, then $f(x_n)-f(y_n) \to 0$.

The result is a consequence of the following simple lemmas with $f(x,y,z)=x^2/(zy)$.

\begin{lem} \label{lem:quotients}
Suppose $X_n$ and $Y_n$ are random vectors in $\mathbb{R}^d$ and that $f$ is uniformly continuous on the essential ranges of $X_n$ and $Y_n$   \cite{rudin2006real}.  Suppose $X_n - Y_n$ goes to zero, either in probability or almost surely. Then $f(X_n)-f(Y_n)$ goes to zero in probability or almost surely, respectively.
\end{lem}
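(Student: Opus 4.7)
The plan is to show that uniform continuity of $f$ directly transfers the smallness of $X_n-Y_n$ to the smallness of $f(X_n)-f(Y_n)$. The only care required is that each $X_n$ and each $Y_n$ lies in its essential range with probability one, so the null events on which they escape can be safely unioned away. I would read the hypothesis ``uniformly continuous on the essential ranges of $X_n$ and $Y_n$'' as asserting that there is a common set containing the essential ranges of every $X_n$ and $Y_n$ on which $f$ is uniformly continuous, so that a single $\delta$ suffices simultaneously for all $n$.

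First I would handle the almost sure case. Fix $\varepsilon>0$ and use uniform continuity to pick $\delta>0$ such that $\|x-y\|<\delta$ with $x,y$ in the common essential-range set forces $\|f(x)-f(y)\|<\varepsilon$. Let $A$ be the almost sure event on which $X_n-Y_n\to 0$, and let $B_n, C_n$ be the almost sure events on which $X_n$ and $Y_n$ lie in their respective essential ranges. The countable intersection $A \cap \bigcap_n (B_n \cap C_n)$ is a full-measure event on which $\|X_n-Y_n\|<\delta$ eventually, and hence $\|f(X_n)-f(Y_n)\|<\varepsilon$ eventually. Since $\varepsilon$ was arbitrary, $f(X_n)-f(Y_n)\to 0$ almost surely.

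For the convergence-in-probability case, fix $\varepsilon>0$ and choose $\delta>0$ as above. Observe the set inclusion
\[
\{\|f(X_n)-f(Y_n)\| \ge \varepsilon\} \;\subseteq\; \{\|X_n-Y_n\| \ge \delta\} \;\cup\; E_n,
\]
where $E_n$ is the null event that $X_n$ or $Y_n$ falls outside its essential range. Taking probabilities gives $\Pr[\|f(X_n)-f(Y_n)\| \ge \varepsilon] \le \Pr[\|X_n-Y_n\| \ge \delta]$, and the right-hand side goes to zero by hypothesis. Since $\varepsilon$ was arbitrary, $f(X_n)-f(Y_n) \toprob 0$.

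I do not anticipate a genuine obstacle; this is the standard continuous-mapping argument with the added bookkeeping of restricting to essential ranges. The only subtlety worth stating explicitly is the uniformity of $\delta$ across $n$, which is what lets the single choice of $\delta$ work in both parts and which justifies reading the uniform-continuity hypothesis as above.
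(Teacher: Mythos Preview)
Your proposal is correct and is essentially what the paper has in mind: the paper does not spell out a proof of this lemma, merely remarking just before its statement that the deterministic version follows immediately from the definition of uniform continuity, and leaving the probabilistic version as evident. Your write-up is simply a careful unpacking of that same $\varepsilon$--$\delta$ argument, with the essential-range bookkeeping made explicit.
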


\begin{lem} \label{lem:trace}
Let $\bfrh_n$ be a sequence of random matrices for which the random variables $\limsup_n \left\Vert \bfrh_n \right\Vert$ and $\sup_n \left\Vert \bfrh_n^{-1}\right\Vert$ are almost surely bounded and $\bfrh_n$ is independent of $\bs_n$. 
 Let
\[
\underline{X}_n = \left(\bs_n'\bfrh^{-1}_n \bs_n ,  \bs_n'\bfrh^{-1}_n\bfr_n\bfrh_n^{-1} \bs_n, \bs_n'\bfr^{-1}_n \bs_n\right)
\]
and
\[
\tilde{\underline{X}}_n = p_n^{-1} \left(\tr(\bfrh_n^{-1}), \tr(\bfrh_n^{-2}\bfr_n), \tr(\bfr_n^{-1}) \right).
\]
Then the essential ranges of $X_n$ and $\tilde{X}_n$ do not include zero  
and
\[
\underline{X}_n - \tilde{\underline{X}}_n \toas 0.
\]
\end{lem}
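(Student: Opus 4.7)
The plan is to exploit that $\bs_n$, being uniform on the complex unit sphere in $\mathbb{C}^{p_n}$ and independent of all three Hermitian matrices appearing in $\underline{X}_n$, enjoys dimension-dependent concentration of quadratic forms. Specifically, for a Hermitian $\bfa_n$ independent of $\bs_n$ with $\|\bfa_n\|\le K$, I would use (conditionally on $\bfa_n$) either Hanson--Wright applied to the representation $\bs_n=\bz_n/\|\bz_n\|$ with $\bz_n\sim\mathcal{CN}(0,I)$ combined with the law of large numbers $\|\bz_n\|^2/p_n\to 1$ a.s., or a direct Levy-type sphere concentration inequality, to obtain
\[
\Pr\!\left[\left|\bs_n'\bfa_n\bs_n - p_n^{-1}\tr(\bfa_n)\right|>\epsilon \,\Big|\,\bfa_n\right] \le C\exp\!\left(-c\,p_n\epsilon^2/K^2\right).
\]

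Next I would show that each of the three matrices $\bfa_n\in\{\bfrh_n^{-1},\,\bfrh_n^{-1}\bfr_n\bfrh_n^{-1},\,\bfr_n^{-1}\}$ has an a.s.\ uniform-in-$n$ spectral-norm bound: the first from the hypothesis that $\sup_n\|\bfrh_n^{-1}\|$ is a.s.\ bounded by some constant $K_1$; the second by combining this with $\|\bfr_n\|=M$, giving $K_1^2 M$; and the third by $\lambda_{\min}(\bfr_n)=1$, giving $\|\bfr_n^{-1}\|\le 1$. On the probability-one event on which these bounds hold, the conditional tail bounds above are summable in $n$ because $p_n/n\to\gamma\in(0,\infty)\setminus\{1\}$ forces $p_n\to\infty$ linearly. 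Conditioning on the training data (which determines every $\bfrh_n$) and applying the conditional Borel--Cantelli lemma componentwise yields $\underline{X}_n-\tilde{\underline{X}}_n\toas 0$.

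For the essential-range claim, I would observe that every component of both vectors is a quadratic form or normalized trace of a positive-definite matrix, so $\bs_n'\bfa_n\bs_n\ge\lambda_{\min}(\bfa_n)>0$ and $p_n^{-1}\tr(\bfa_n)\ge\lambda_{\min}(\bfa_n)>0$. Using the spectral-norm hypotheses in reverse (smallest eigenvalues are reciprocals of largest eigenvalues of the inverses), these minimum eigenvalues are a.s.\ bounded below by positive constants, at least for all $n$ larger than a random threshold $N(\omega)$. Combined with the a.s.\ upper bounds already invoked, each component lies a.s.\ eventually in a fixed compact subinterval of $(0,\infty)$, which is the content needed so that $f(x,y,z)=x^2/(yz)$ is uniformly continuous on the joint essential range in the subsequent application of Lemma~\ref{lem:quotients}.

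The main obstacle is the bookkeeping around the $\limsup$ (rather than $\sup$) assumption on $\|\bfrh_n\|$: this makes the upper bound on $\|\bfrh_n\|$ eventually uniform rather than strictly uniform, so the concentration argument applies only on an a.s.\ event restricted to $n\ge N(\omega)$. Since almost sure convergence and the essential-range property are unaffected by altering finitely many indices, this is harmless, but it must be handled carefully when invoking Borel--Cantelli so that the random tail is absorbed into an outer probability-one event before summing the conditional tail bounds.
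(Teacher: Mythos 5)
Your proposal is correct and follows essentially the same route as the paper: Levy-type concentration of the quadratic form $\bs_n'\bfa_n\bs_n$ about $p_n^{-1}\tr(\bfa_n)$ conditionally on the independent matrix, summable tails via $p_n\sim\gamma n$ and (conditional) Borel--Cantelli, applied to $\bfrh_n^{-1}$, $\bfrh_n^{-1}\bfr_n\bfrh_n^{-1}$, and $\bfr_n^{-1}$, with the essential-range claim coming from lower eigenvalue bounds. One minor note: the concentration step itself only needs $\sup_n\lv\bfrh_n^{-1}\rv$ and $\lv\bfr_n\rv=M$ (genuine uniform bounds), so the $\limsup_n\lv\bfrh_n\rv$ caveat you flag really only enters the lower-bound/essential-range part, where your ``eventually, beyond a random index'' handling suffices.
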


%

To prove Lemma~\ref{lem:trace}, we will need a couple of supporting results.  The first concerns approximating a trace using a quadratic form.
\begin{lem} \label{lem:trace-approx}
Let $\bfy$ be a random complex column vector that is uniformly distributed on the sphere in $\mathbb{C}^p$.  Let also $\bfA$ be a complex $p\times p$ matrix.  Then there exists a constant $c > 0$ independent of $p$ and $\bfA$ such that for all $\epsilon > 0$ we have
\[
\Pr\left[ \left|\bfy' \bfA \bfy - \frac{1}{p}\tr \bfA \right| \ge \epsilon \right] \le \exp\left(-cp\epsilon^2/\left\Vert \bfA \right\Vert^2 \right).
\]
\end{lem}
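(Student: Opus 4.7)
The plan is to derive the bound from concentration of measure on the sphere, which is the standard route for quadratic forms on the uniform sphere. The first step is to identify the mean: since $\bfy$ is uniform on the sphere in $\mathbb{C}^p$, one has $\mathbb{E}[\bfy\bfy^*] = I_p/p$, so by cyclicity of trace $\mathbb{E}[\bfy^*\bfA\bfy] = \tr(\bfA)/p$. Thus the lemma is an assertion that the quadratic form concentrates around its mean with sub-Gaussian tails at scale $\|\bfA\|/\sqrt{p}$.

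The second step is to verify that $\bfy \mapsto \bfy^*\bfA\bfy$ is Lipschitz. Using $\|\bfy\|=\|\bfx\|=1$ and the identity $\bfy^*\bfA\bfy - \bfx^*\bfA\bfx = \bfy^*\bfA(\bfy-\bfx) + (\bfy-\bfx)^*\bfA\bfx$, Cauchy-Schwarz gives the bound $|\bfy^*\bfA\bfy - \bfx^*\bfA\bfx| \le 2\|\bfA\|\,\|\bfy-\bfx\|$ in the ambient Euclidean metric. Because the form is generally complex-valued, I would then decompose $\bfA = \bfA_H + i\bfA_S$ where $\bfA_H = (\bfA+\bfA^*)/2$ and $\bfA_S = (\bfA-\bfA^*)/(2i)$ are Hermitian with operator norm at most $\|\bfA\|$; the real-valued functions $\bfy^*\bfA_H\bfy$ and $\bfy^*\bfA_S\bfy$ then inherit the $2\|\bfA\|$-Lipschitz property.

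The third step is to apply Lévy's concentration inequality. Identifying the unit sphere in $\mathbb{C}^p$ with $S^{2p-1}\subset\mathbb{R}^{2p}$, any $L$-Lipschitz real-valued function $f$ satisfies $\Pr[|f-\mathbb{E} f| \ge t] \le 2\exp(-c'(2p)t^2/L^2)$ for an absolute constant $c'>0$. Applying this to the real and imaginary parts separately with $L = 2\|\bfA\|$ and $t = \epsilon/\sqrt{2}$, and combining via a union bound over the two parts, yields a bound of the form $4\exp(-c''p\epsilon^2/\|\bfA\|^2)$ on the probability in the lemma.

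The final cosmetic step is to absorb the leading constant $4$ to obtain the form $\exp(-cp\epsilon^2/\|\bfA\|^2)$ with a smaller $c>0$. This is legitimate because the inequality is only informative when its right-hand side is less than $1$: on the complementary range the trivial bound suffices, so one can shrink $c$ until the prefactor becomes $1$. The main obstacle is really only bookkeeping, handling the complex-valued output by splitting into Hermitian components and keeping track of constants through the Lévy step; the core analytic input is just the standard concentration of Lipschitz functions on the sphere.
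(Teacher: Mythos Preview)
Your proposal is correct and follows essentially the same route as the paper: identify the mean, bound the Lipschitz constant of the quadratic form on the sphere, and invoke L\'evy-type concentration (the paper cites Vershynin's Theorem~5.1.4, working on the sphere of radius $\sqrt{p}$ rather than the unit sphere, which is merely a rescaling). Your treatment is in fact slightly more careful: the paper writes the gradient of $\bu\mapsto\bu'\bfA\bu$ as $2\bfA\bu$, which tacitly assumes $\bfA$ Hermitian, whereas your Hermitian decomposition $\bfA=\bfA_H+i\bfA_S$ handles the general complex case cleanly at the cost of an easily-absorbed prefactor.
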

\begin{proof}
Let $\bu$ be a random vector uniformly distributed on the sphere of radius $\sqrt{p}$ in $\mathbb{C}^p$.  It follows immediately from \cite[Theorem~5.1.4]{vershynin2018high} that if $f$ is Lipschitz  on this sphere and $t>0$, there exists $c' >0$ independent of $f$ and $t$ such that
\[
\Pr\left[ \left| f(\bu) - \mathbb{E}f(\bu) \right| \ge t \right] \le \exp(-c' t^2/L_f^2),
\]
where $L_f$ is a Lipschitz constant of $f$.  It is well-known that $\mathbb{E}[\bu\bu'] = \mathbf{I}$, the $p\times p$ identity matrix.
Thus, $$\mathbb{E} \bu'\bfA \bu = \mathbb{E} \tr(\bfA\bu\bu') = \tr\left(\bfA \mathbb{E}[\bu\bu'] \right)= \tr \bfA.$$
Further, since the gradient of $f(\bu)=\bu'\bfA\bu$ is $2\bfA \bu$, a Lipschitz constant of $f$ is easily seen to be $2\left\Vert \bfA\right\Vert \left\Vert \bu \right\Vert$, which is equal to $2\sqrt{p} \left\Vert \bfA\right\Vert $ on the sphere in question.  Taking $\bfy = \bu/\sqrt{p}$, then, we get
\begin{align*}
& \Pr\left[ \left|\bfy' \bfA \bfy - \frac{1}{p}\tr \bfA \right| \ge \epsilon \right] \\
& \Pr\left[ \left|\bu' \bfA \bu - \tr \bfA \right| \ge p \epsilon \right] \\
& = \Pr\left[\left| f(\bu) - \mathbb{E}f(\bu) \right| \ge p\epsilon \right] \\
& \le \exp\left(-c'(p\epsilon)^2/(2\sqrt{p}\left\Vert\bfA\right\Vert)^2\right).
\end{align*}
The result follows by taking $c= c'/4$.
%
\end{proof}


The second preliminary lemma converts the approximation in Lemma~\ref{lem:trace-approx} into almost sure convergence.
\begin{lem}
Let $\mathbf{A}_n \in \mathbb{C}^{p_n \times p_n}$ be a positive-definite random matrix such that the random variable $\sup_n \left\Vert \mathbf{A}_n \right\Vert$ is almost surely bounded. 
Then
\[
\left \vert \bs_n' \mathbf{A}_n \bs_n - \frac{1}{p_n} \tr(\mathbf{A}_n) \right \vert \toas 0.
\]
\end{lem}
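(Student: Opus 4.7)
The plan is to apply the preceding concentration lemma (which bounds $\Pr[|\bs'\mathbf{A}\bs - p^{-1}\tr \mathbf{A}|\ge\epsilon]$ for deterministic $\mathbf{A}$ and uniform $\bs$) conditionally on $\mathbf{A}_n$, and then deploy Borel--Cantelli together with a truncation trick to handle the randomness of $\|\mathbf{A}_n\|$. Since $\bs_n$ is uniform on the unit sphere in $\mathbb{C}^{p_n}$ and is independent of $\mathbf{A}_n$ (the latter being constructed from the training data $\bX_n$), conditioning on $\mathbf{A}_n$ puts us squarely in the setting of the previous lemma.

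Fix $\epsilon>0$ and let $E_n(\epsilon)=\{|\bs_n'\mathbf{A}_n\bs_n - p_n^{-1}\tr\mathbf{A}_n|\ge \epsilon\}$. Applying the preceding lemma conditionally on $\mathbf{A}_n$ gives
\[
\Pr[E_n(\epsilon)\mid \mathbf{A}_n] \le \exp\!\bigl(-c\,p_n\epsilon^2/\|\mathbf{A}_n\|^2\bigr).
\]
The right-hand side is random because $\|\mathbf{A}_n\|$ is; to get deterministic control I would truncate. Since $\sup_n\|\mathbf{A}_n\|$ is a.s.\ finite, for each $\delta>0$ there exists $M=M(\delta)<\infty$ with $\Pr[\Omega_M]\ge 1-\delta$, where $\Omega_M=\{\sup_n\|\mathbf{A}_n\|\le M\}$. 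On $\Omega_M$ the conditional bound becomes the deterministic quantity $\exp(-cp_n\epsilon^2/M^2)$, so taking expectations yields
\[
\Pr[E_n(\epsilon)\cap\Omega_M] \le \exp\!\bigl(-cp_n\epsilon^2/M^2\bigr).
\]

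Because $[\textsc{Asy}(\gamma)]$ forces $p_n$ to grow linearly in $n$, the right-hand side is summable in $n$, so Borel--Cantelli gives $\Pr[\limsup_n(E_n(\epsilon)\cap\Omega_M)]=0$. Consequently
\[
\Pr[\limsup_n E_n(\epsilon)] \le \Pr[\Omega_M^c] + \Pr[\limsup_n(E_n(\epsilon)\cap\Omega_M)] \le \delta,
\]
and sending $\delta\downarrow 0$ (i.e.\ $M\uparrow\infty$) yields $\Pr[\limsup_n E_n(\epsilon)]=0$. Since this holds for every $\epsilon>0$, on a set of full measure the quantity $|\bs_n'\mathbf{A}_n\bs_n - p_n^{-1}\tr\mathbf{A}_n|$ is eventually below any prescribed $\epsilon$, which is the asserted almost-sure convergence.

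The main obstacle is the randomness of $\|\mathbf{A}_n\|$: the deterministic concentration inequality cannot be plugged in directly because one cannot condition on an a.s.\ bound uniformly. The truncation argument via the events $\Omega_M$ is the key maneuver, and it relies essentially on the hypothesis that $\sup_n\|\mathbf{A}_n\|$ is a.s.\ finite (not merely that each $\|\mathbf{A}_n\|$ is a.s.\ finite); linear growth of $p_n$ from $[\textsc{Asy}(\gamma)]$ then guarantees summability and allows Borel--Cantelli to close the argument.
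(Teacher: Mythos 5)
Your proposal is correct and follows essentially the same route as the paper: condition on $\mathbf{A}_n$, apply the preceding concentration lemma to get an exponential bound, and conclude by Borel--Cantelli using the linear growth of $p_n$ under $[\textsc{Asy}(\gamma)]$; your only addition is the truncation via $\Omega_M$, which handles the reading of ``almost surely bounded'' as merely a.s.\ finite, whereas the paper takes it to mean bounded by a single constant $D$ with probability one and so skips that step. (For full rigor in the expectation step, note $\Omega_M\subset\{\left\Vert \mathbf{A}_n\right\Vert\le M\}$ and condition on the $\sigma(\mathbf{A}_n)$-measurable event on the right, which yields the same bound.)
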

\begin{proof}
Let $\epsilon > 0$. By the Borel-Cantelli lemma (see, for example, \cite[Theorem~10.10]{folland1999real}), it suffices to prove that
\[
 \Pr\left[\left|\mathbf{s}'_{n}\mathbf{A}_{n}\mathbf{s}_{n}-\frac{1}{p_{n}}\tr(\mathbf{A}_{n})\right|\ge\epsilon\right] 
\]
goes to zero at a rate of $O(n^{-1-\delta})$ for some $\delta > 0$.  
By the definition of conditional probabiilty the above probability is equal to
\begin{equation}
\mathbb{E} \left ( \Pr\left[\left. \left|\bs'_{n}\mathbf{A}_{n}\bs_{n}-\frac{1}{p_{n}}\tr(\mathbf{A}_{n})\right|\ge\epsilon\ \right| \mathbf{A}_{n}\right] \right ). \label{eq:trace-approx-01}
\end{equation}
Now, since
$\sup_n \left\Vert \bfA_n \right\Vert$ is bounded above by some constant $D$ with probability 1, it follows from Lemma \ref{lem:trace-approx} that with probability 1 we have
\begin{align*}
&  \Pr\left[\left. \left|\bs'_{n}\mathbf{A}_{n}\bs_{n}-\frac{1}{p_{n}}\tr(\mathbf{A}_{n})\right|\ge\epsilon\ \right|\mathbf{A}_{n}\right] \\
& \leq \exp\left(-cp_n\epsilon^2/D^2\right) \\
 &  \sim \exp\left(-c\myc n\epsilon^2/D^2\right).
\end{align*}
The result follows.

\end{proof}

The matrices $\bfrh_n^{-1}$ and $\bfrh_n^{-1}\bfr_n\bfrh_n^{-1}$  are certainly independent of $\bs_n$.  The proof of Lemma~\ref{lem:trace} is complete if we observe that essential ranges of the smallest eigenvalues of these two matrices do not include zero.

\section{Maximization of \eqref{eq:eta-tilde-D}} \label{sec:eta-tilde-D}
For fixed $\bfr_n$, the quantity under consideration depends only on
\begin{equation} \label{eq:eta-tilde-D-2}
\frac{\tr(\bD_n^{-1})^2}{\tr\left(\bD_n^{-2}\bU_n'\bfr_n\bU_n\right)}.
\end{equation}
Let us abbreviate $p_n$ by $p$, $\bD_n$ by $\bD$, and $\bU'_n \bfr_n \bU_n$ by $\mathbf{C}$.  Then \eqref{eq:eta-tilde-D-2} is equivalent to 
\begin{equation}
\tr(\bD^{-1})^2/\tr(\bD^{-2}\mathbf{C}).
\label{eq:eta-tilde-D-3}
\end{equation}
  Observe that $\mathbf{C}$ is positive-definite.
If $C_{ii}$ are the diagonal entries of $\mathbf{C}$ and $D_{ii}$ are the diagonal entries of $\bD$, then we have
\[
\tr(\bD^{-1})^2 = \left( \sum_{i=1}^p ( D^{-1}_{ii} C_{ii}^{-1} )C_{ii} \right)^2.
\]
By Cauchy-Schwarz applied to the inner product $((a_i),(b_i))\in\mathbb{R}^p\times \mathbb{R}^p \mapsto \sum_{i=1}^p a_i b_i C_{ii}$, the above is bounded by\\
\[
\sum_{i=1}^p D_{ii}^{-2} C_{ii} \sum_{i=1}^p C_{ii}^{-1} = \tr(\bD^{-2} \mathbf{C})  \sum_{i=1}^p C_{ii}^{-1} , 
\]
with equality if and only if $D_{ii}^{-1}C_{ii} = 1$ for all $i$.  Thus, \eqref{eq:eta-tilde-D-3} is bounded above by $ \sum_{i=1}^p C_{ii}^{-1} $ with equality iff there is a positive scalar $\alpha$ such that $D_{ii}=\alpha C_{ii}$.  The proof is complete by noting that $C_{ii} =(\bU_n'\bfr_n\bU_n)_{ii}$ is equal to $\bu_{n,i}'\bfr_n \bu_{n,i}$.

\section{Proof of Theorem~\ref{thm:S-consistency-1}} \label{sec:lw-consistency}
Let the eigenvalues of $\bfr_n$ be $\tau_{n,1} \le\dots \le\tau_{n, p_n}$, and let $\boldsymbol{\lambda}_n = (\lambda_{n,1},  \lambda_{n,2}, \dots ,\lambda_{n,p_n})$ be the eigenvalues of $\bS_n$ in ascending order.  Let $\bU_n = [\bu_{n,1}, \dots, \bu_{n,p_n}]$ be a random unitary matrix satisfying the constraint that $\bU_n' \bS_n \bU_n$ is diagonal.
Let $[y]^+$ be defined as $\max\{y, 0\}$ and $h_n$ be defined as $n^{-0.35}$.
Temporarily suppressing the primary subscripts of the $\lambda_{n,j}$'s, we write $\boldsymbol{\lambda} = (\lambda_1, \dots \lambda_{p_n})$. Then $a(\lambda,\boldsymbol{\lambda})$ is defined by 
\[
\sum_{j=p-n+1}^{p}\frac{\mathrm{sgn}(\lambda-\lambda_{j})\sqrt{\left[(\lambda-\lambda_{j})^{2}-4\lambda_{j}^{2}h_{n}^{2}\right]^{+}}-\lambda+\lambda_{j}}{2 \lambda_{j}^{2}h_{n}^{2}}
\]
and $b(\lambda, \boldsymbol{\lambda})$ is defined by
\[
\sum_{j=p-n+1}^{p}\frac{\sqrt{\left[4\lambda_{j}^{2}h_{n}^{2}-(\lambda-\lambda_{j})^{2}\right]^{+}}-\lambda+\lambda_{j}}{2 \lambda_{j}^{2}h_{n}^{2}},
\]where the summands are defined to be zero when $j$ is not positive.  With $z_{n,j}= \pi \min\{n,p_n\}^{-1}(a(\lambda_{n,j},\boldsymbol{\lambda}_n)+ib(\lambda_{n,j},\boldsymbol{\lambda}_n))$, the shrunken eigenvalues $\td_{n,j}$ of \cite{ledoit2017direct} are
\begin{equation} \label{eq:dtildes}
\td_{n,j} \coloneqq \begin{cases}
\frac{\lambda_{n,j}}{|1-p/n-p/n\lambda_{n,j} z_{n,j} |^2}, & \text{if $\lambda_{n,j} > 0 $} \\
\frac{1}{\pi (p/n-1)a(0,\boldsymbol{\lambda}_n)/n}, & \text{if $\lambda_{n,j} = 0$}
\end{cases}
\end{equation}
Define $\check{d}_{n,j}$ by $\check{d}(\td_{n,j})$, where
\[
\check{d}(x) = \begin{cases}
\lambda_{n,p_n}, & \text{if $x> \lambda_{n,p_n}$} \\
1, & \text{if $x < 1$} \\
x, & \text{else}.
\end{cases}
\]
For $\bx\in\mathbb{R}^p$, let $\hat{\bx} = \operatorname{PAV}(\bx)$ be defined by the Pool-Adjacent Violators algorithm of \cite{ayer1955empirical}:
\[
\hat{\mathbf{x}} = \argmin\limits_{y_1\le y_2\le \dots \le y_p} \sum_{i=1}^p (x_i - y_i)^2.
\]
Let $\boldsymbol{\hat{d}}_n = \operatorname{PAV}(\boldsymbol{\check{d}}_n)$.
Then we claim that $\bfrh_n$ defined to be $\bU_n \mathrm{diag}(\hat{d}_{n,1}, \hat{d}_{n,2},\dots, \hat{d}_{n,p_n}) \bU_n'$ is oracle consistent.
\begin{rem}
We note that $\bU_n \mathrm{diag}(\boldsymbol{\check{d}}_n) \bU_n'$ is oracle-consistent as well, but appears to converge more slowly to the oracle for the realistic covariance considered in this paper.
\end{rem}

Let $H_n$ be the ``empirical spectral distribution function'' of $\bfr_n$:
 \[
 H_n(\tau) = p_n^{-1} \#\{ j: \tau_{n,j} \le \tau \},
 \]
 where $\tau_{n,1}\le \tau_{n,2}\le \dots \tau_{n,p_n}$ are the eigenvalues of $R_n$.
 Then 
 \[
 H_n(\tau) \to H(\tau) := \mathbf{1}(1\le \tau).
 \]
The empirical spectral distribution function of $\bS_n$ is
 \[
 F_n(\lambda) = p_n^{-1}\#\{j:\lambda_{n,j}\le \lambda\}.
 \]
 This is a random variable for each fixed $\lambda$. However, by the well-known Mar\v{c}enko-Pastur theorem \cite{marvcenko1967distribution} there is a deterministic c.d.f. $F$ such that $F_n(\lambda) \toas F(\lambda)$ for every point $\lambda$ at which $F$ is continuous \cite{marvcenko1967distribution, silverstein1995strong}.

 By \cite[Theorem~1.4]{ledoit2011eigenvectors}, there exists an integrable function $\delta: \mathbb{R}_{\ge 0} \to (0,\infty]$ such that
 \begin{equation} \label{eq:lp4}
 p_n^{-1} \sum_{i=1}^{p_n} \bu_{n,i}' \bfr_n \bu_{n,i} \mathbf{1}(\lambda_{n,i} \le \lambda) \toas \int_{-\infty}^\lambda \delta(l)\, dF(l),
 \end{equation}
 where $\mathbf{1}$ denotes an indicator function. 
 By \cite{ledoit2011eigenvectors, ledoit2017direct}, the function $\delta(l)$, which depends only on $F$ and $l$, is continuous except possibly where it is infinite. 
 Let $\delta_{n,i} \coloneqq \delta(\lambda_{n,i})$.

The shrunken eigenvalues $\td_{n,i}$ of \eqref{eq:dtildes} were shown in \cite{ledoit2017direct}  to have a key uniform consistency property: $\sup_{1\le i \le p_n} | \td_{n,i} - \delta_{n,i} | \toprob 0$ as $n\to \infty$.  Defining $\check{d}_{n,i} = \check{d}(\td_{n,i})$ and $\check{\delta}_{n,i} = \check{d}(\delta_{n,i})$, and using the fact that $ \sup_{1\le i \le p_n} | \check{d}_{n,i} - \check{\delta}_{n,i}| \le \sup_{1\le i\le p_n} | \tilde{d}_{n,i} - \delta_{n,i} |$, 
we can also say that 
\begin{equation} \label{eq:unif-consistency-ours}
 \sup_{1\le i \le p_n} | \check{d}_{n,i} - \check{\delta}_{n,i}|  \toprob 0,
\end{equation}
as $n\to\infty$.

Now consider $\boldsymbol{\hat{d}}_n = \operatorname{PAV}(\boldsymbol{\check{d}}_n)$.  Each entry $\hat{d}_{n,i}$ is bounded below by 1 and above by $\lambda_{n,p_n}$.  Further, since $\lambda_{n,p_n} \le M \left\Vert \bZ_n\bZ_n'/n\right\Vert$, with $\bZ_n = \bfr_n^{-1/2}\bX_n$, we have by \cite{bai2008limit} almost surely
\begin{equation} \label{eq:boundonlambda}
\limsup_{n\to\infty} \lambda_{n,p_n} \le M (1+\sqrt{\gamma})^2,
\end{equation}
where we recall that $M = \left\Vert \bfr_n \right\Vert$.
Thus, the estimator $\bfrh_n$ satisfies condition (ii) of Definition~\ref{def:shrinkage}.  From \eqref{eq:boundonlambda} and Cauchy-Schwarz that $\bfrh_n$ is oracle-consistent if 
\[
\frac{1}{p_n} \sum_{i=1}^{p_n} |\hat{d}_{n,i} - \bu_{n,i}' \bfr_n \bu_{n,i} | \toprob 0.
\]
Thus, for oracle consistency, it suffices to show
 \begin{equation} \label{eq:hat-consistency}
 p_n^{-1} \sum_{i=1}^{p_n} \left| \hat{d}_{n,i} - \check{\delta}_{n,i} \right| \toprob 0
 \end{equation}
 and
\begin{equation} \label{eq:oracles-consistency}
 p_n^{-1} \sum_{i=1}^{p_n} \left| \check{\delta}_{n,i} - \bu_{n,i}'\bfr_n\bu_{n,i} \right| \toas 0.
 \end{equation}

Consider \eqref{eq:oracles-consistency}.
%
%
We may expand $d_{n,i}^* = \bu_{n,i}'\bfr_n \bu_{n,i}$ in terms of the eigenvectors $\bv_{n,j}$ of $\bfr_n$ having eigenvalues $\tau_{n,j}$:
\begin{align*}
& \bu_{n,i}' \bfr_n \bu_{n,i}  \\
&  = \sum_{j=1}^{p_n} \tau_{n,j} \left| \langle \bu_{n,i}, \bv_{n,j} \rangle \right|^2 \\
& = \sum_{j=p_n-r+1}^{p_n} \tau_{n,j} \left| \langle \bu_{n,i}, \bv_{n,j} \rangle \right|^2 + \sum_{j=1}^{p_n-r} \left| \langle \bu_{n,i}, \bv_{n,j} \rangle \right|^2 \\
& = 1 + \sum_{j=p_n-r+1}^{p_n} (\tau_{n,j} - 1) \left| \langle \bu_{n,i}, \bv_{n,j} \rangle \right|^2,
\end{align*}
where we have used the identity $\left(\sum_{j=1}^{p_n - r} +\sum_{j=p_n-r+1}^{p_n} \right) \left| \langle \bu_{n,i}, \bv_{n,j} \rangle \right|^2 = 1$
Thus, for \eqref{eq:oracles-consistency}, it suffices to show
\begin{equation} \label{eq:deltas}
p_n^{-1} \sum_{i=1}^{p_n} \left| \check{\delta}_{n,i} - 1 \right| \toas 0
\end{equation}
and
 \begin{equation} \label{eq:biases}
(M - 1) p_n^{-1}  \sum_{i=1}^{p_n}   \sum_{j=p_n-r+1}^{p_n} \left| \langle \bu_{n,i}, \bv_{n,j} \rangle\right|^2 \toas 0.
\end{equation}

Consider \eqref{eq:biases}.  The result \cite[Theorem~1.3]{ledoit2011eigenvectors} states that there is an integrable function $\varphi(l,t)$ such that
\begin{align*}
& \Phi_{p_n}(\lambda, \tau) \\
& \coloneqq \frac{1}{p_n} \sum_{j=1}^{p_n} \sum_{i=1}^{p_n} \left| \langle \bu_{n,i}, \bv_{n,j} \rangle\right|^2 \mathbf{1}(\lambda_{n,i} \le \lambda) \mathbf{1}(\tau_{n,j} \le \tau)
\end{align*}
converges almost surely to
\[
\int_{-\infty}^\lambda \int_{-\infty}^\tau \varphi(l,t)\, dH(t)dF(l).
\]
Thus, the left side of \eqref{eq:biases} is proportional to $\Phi_{p_n}(\max\mathrm{supp} F, \tau_{p_n}) - \Phi_{p_n}(\max\mathrm{supp} F, \tau_{p_n-r+1})$, which, because the set $\{\tau_{p_n-r+1}, \dots \tau_{p_n}\}$ is outside the support of $dH$, converges almost surely to zero, as desired.

Consider \eqref{eq:deltas}. As we have stated, $\delta$ is continuous except possibly where it is infinite.  
Letting $\check{\delta}(\lambda) = \check{d}(\delta(\lambda))$, this problem of infinities is removed if we restrict to a null-complemented event where for $n$ large enough $\lambda_{n,p_n} \le M(1+\sqrt{\gamma})^2+\epsilon$.
Thus, for \eqref{eq:deltas}, by the portmanteau theorem
\begin{align}
\frac{1}{p_n} \sum_{i=1}^{p_n} \left| \check{\delta}_{n,i} - 1 \right| & = \int \left| \check{\delta}(l) - 1 \right|\, dF_n(l) \nonumber \\
& \toas \int \left| \check{\delta}(l) - 1 \right|\, dF(l). \label{eq:desired-lp-zero}
\end{align}
Since $\delta$ depends only on the limiting spectral distribution function $F$, it is the same for the sequence $\bfr_n = \mathbf{I}_{p_n}$ as for our sequence.  For this simpler sequence we have
\begin{align*}
F_n(\lambda) & = \frac{1}{p_n} \# \{ i : \lambda_{n,i} \le \lambda \} \\
& = \frac{1}{p_n} \sum_{i=1}^{p_n} \bu_{n,i}' \mathbf{I}_{p_n} \bu_{n,i} \mathbf{1}(\lambda_{n,i}\le \lambda) \\
& \toas \int_{-\infty}^\lambda \delta(l)\, dF(l),
\end{align*}
where the convergence follows from \cite[Theorem~1.4]{ledoit2011eigenvectors}.  Since $F_n$ also converges weakly almost surely to $F$, it must be that
\[
F(\lambda) = \int_{-\infty}^\lambda \delta(l)\, dF(l),
\]
which by the fundamental theorem of calculus, implies that $\delta$ is $F$-a.e. equal to 1. Thus, so is $\check{\delta}$, as desired.

We now prove \eqref{eq:hat-consistency}.  By Cauchy-Schwarz, this relation follows if we can show that $g_n(\bdhat_n, \bdeltacheck_n) \toprob 0$, where $g_n(\ba_n, \bb_n)$ is defined for $\ba_n$ and $\bb_n$ in $\mathbb{R}^{p_n}$ as
\[
p_n^{-1} \sum_{i=1}^{p_n} (b_{n,i} - a_{n,i})^2.
\]
With this definition $\boldsymbol{\hat{a}}_n \coloneqq \operatorname{PAV}(\ba_n)$ satisfies
\[
 \boldsymbol{\hat{a}}_n = \argmin\limits_{y_{n,1} \le y_{n,2} \le \dots \le y_{n,p_n}} g_n(\ba_n, \mathbf{y}_n).
\]
Consider first the lemma below.
\begin{lem} \label{lem:closest-points}
 Suppose $\boldsymbol{a}_{n}$ and $\boldsymbol{b}_{n}$ are sequences of random $p_n$-vectors whose components lie between 1 and $\lambda_{n,p_n}$ and such that $g_n(\ba_n, \bb_n)$ converges in probability to zero as $n\to\infty$.  Then
 \begin{align} \label{eq:lem-ineq}
& \left| g_n(\ba_n, \boldsymbol{\hat{a}}_n) - g_n(\bb_n, \boldsymbol{\hat{b}}_n) \right| \toprob 0. 
 \end{align}
 \end{lem}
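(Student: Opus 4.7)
The plan is to exploit the fact that $\operatorname{PAV}$ is the orthogonal projection onto the closed convex cone $\mathcal{C}$ of non-decreasing vectors in $\mathbb{R}^{p_n}$ with respect to the Euclidean inner product underlying $g_n$.  Under this identification $\sqrt{g_n(\,\cdot\,,\,\cdot\,)}$ is a rescaling of the $\ell^2$ norm and thus satisfies the triangle inequality, and the variational characterization of the projection gives $g_n(\ba_n, \boldsymbol{\hat{a}}_n) \le g_n(\ba_n, \mathbf{y})$ for every $\mathbf{y}\in\mathcal{C}$.

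Applying the optimality property with $\mathbf{y} = \boldsymbol{\hat{b}}_n \in \mathcal{C}$ and then the triangle inequality yields
\[
\sqrt{g_n(\ba_n, \boldsymbol{\hat{a}}_n)} \;\le\; \sqrt{g_n(\ba_n, \boldsymbol{\hat{b}}_n)} \;\le\; \sqrt{g_n(\ba_n, \bb_n)} + \sqrt{g_n(\bb_n, \boldsymbol{\hat{b}}_n)}.
\]
Interchanging the roles of $\ba_n$ and $\bb_n$ produces the matching lower bound, so
\[
\left|\,\sqrt{g_n(\ba_n, \boldsymbol{\hat{a}}_n)} - \sqrt{g_n(\bb_n, \boldsymbol{\hat{b}}_n)}\,\right| \;\le\; \sqrt{g_n(\ba_n, \bb_n)} \;\toprob\; 0.
\]

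To upgrade convergence of the difference of square roots to convergence of the difference itself I would use the factorization $x - y = (\sqrt{x}-\sqrt{y})(\sqrt{x}+\sqrt{y})$ and bound the sum factor.  Since $\operatorname{PAV}$ preserves the componentwise range $[1, \lambda_{n,p_n}]$ of its input (the pooled means appearing in the recursion remain inside this interval), the components of $\boldsymbol{\hat{a}}_n$ and $\boldsymbol{\hat{b}}_n$ lie in $[1, \lambda_{n,p_n}]$, so both $g_n(\ba_n, \boldsymbol{\hat{a}}_n)$ and $g_n(\bb_n, \boldsymbol{\hat{b}}_n)$ are at most $(\lambda_{n,p_n}-1)^2$, which is almost surely bounded by \eqref{eq:boundonlambda}.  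The product of an almost surely bounded sequence and one converging to zero in probability converges to zero in probability, giving the claim.

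I do not anticipate any serious obstacle: the argument is essentially two applications of the triangle inequality together with the variational definition of projection onto a convex cone.  The only step requiring a moment of care is the range-preservation of $\operatorname{PAV}$, but this follows by direct inspection of the pool-adjacent-violators recursion, or alternatively by noting that clipping a monotone vector into $[1, \lambda_{n,p_n}]$ preserves monotonicity and can only decrease the $\ell^2$ distance to an input whose own components already lie in that interval.
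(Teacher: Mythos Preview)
Your proposal is correct and close in spirit to the paper's argument, but the execution differs in a way worth noting.  Both proofs use the variational characterization of $\operatorname{PAV}$ (so that $g_n(\ba_n,\boldsymbol{\hat{a}}_n)\le g_n(\ba_n,\boldsymbol{\hat{b}}_n)$), the range-preservation of $\operatorname{PAV}$, and the almost-sure bound \eqref{eq:boundonlambda} on $\lambda_{n,p_n}$.  Where they diverge is in passing from $g_n(\ba_n,\boldsymbol{\hat{b}}_n)$ to $g_n(\bb_n,\boldsymbol{\hat{b}}_n)$: the paper expands $(a_i-y_i)^2-(b_i-y_i)^2=(a_i-b_i)(a_i+b_i-2y_i)$, bounds the second factor using the componentwise range, and applies Cauchy--Schwarz to the resulting $\ell^1$ sum, obtaining the uniform estimate $|g_n(\ba_n,\by_n)-g_n(\bb_n,\by_n)|\le 4\lambda_{n,p_n}\,g_n(\ba_n,\bb_n)^{1/2}$.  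You instead work at the level of the metric $\sqrt{g_n}$, invoke the triangle inequality directly, and only at the end convert back to $g_n$ via $x-y=(\sqrt{x}-\sqrt{y})(\sqrt{x}+\sqrt{y})$.  Your route is slightly more conceptual and avoids the algebraic expansion; the paper's route yields an explicit quantitative bound \eqref{eq:diff-of-gs} that it reuses later in the appendix.  Either approach is perfectly adequate here.
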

 
\begin{proof}
Let $\epsilon_n = g_n(\ba_n, \bb_n)^{1/2}$. Suppose $\by_n$ has components between 1 and $\lambda_{n,p_n}$.  
Then we have
\begin{align}
& \left| g_n(\boldsymbol{\hat{b}}_n, \by_n) - g_n(\boldsymbol{\hat{a}}_n, \by_n) \right| \nonumber \\ 
& \le 4\lambda_{n, p_n} p_n^{-1}\sum_{i=1}^{p_n} |a_{n,i} - b_{n,i} | \nonumber \\
& \le 4\lambda_{n,p_n} \epsilon_n, \label{eq:diff-of-gs}
\end{align}
where the last inequality follows from Cauchy-Schwarz. Thus,
\begin{align*}
g_n(\ba_n,\boldsymbol{\hat{a}}_n) & \le g_n(\ba_n, \boldsymbol{\hat{b}}_n) \\
& \le g_n(\bb_n, \boldsymbol{\hat{b}}_n) + 4\lambda_{n,p_n}\epsilon_n \end{align*}
The reverse inequality is proved in the same way, and the result follows since $\epsilon_n \toprob 0$.
\end{proof}

Again using the fact that, on a null-complemented event, $\check{\delta}$ is equal to 1 on $\mathrm{supp}F$ and continuous and bounded elsewhere
 we have that
$$ p_n^{-1} \sum_{i=1}^{p_n} (\check{\delta}_{n,i} - 1)^2 \toas \int (\check{\delta}(\lambda)-1)^2\, dF(\lambda) = 0,$$
where the limiting statement follows from the portmanteau theorem.
Thus, $$ g_n((1,1,\dots, 1), \bdeltacheck_n) \toprob 0,$$ and so does  smaller quantity $g_n(\bdeltahat_n, \bdeltacheck_n)$.

We now show that $g_n(\bdhat_n, \bdeltacheck_n) \toprob 0$ as $n\to\infty$, as promised.  By \eqref{eq:unif-consistency-ours} and \eqref{eq:diff-of-gs}, we have $g_n(\bdhat_n, \bdeltacheck_n) - g_n(\bdhat_n, \bdcheck_n) \toprob 0$.  By the Lemma~\ref{lem:closest-points}, with $\ba_n = \bdcheck_n$ and $\bb_n = \bdeltacheck_n$, we get $g_n(\bdhat_n, \bdcheck_n) - g_n(\bdeltahat_n, \bdeltacheck_n) \toprob 0$. But, as indicated by the last paragraph, $g_n(\bdeltahat_n, \bdeltacheck_n) \toas 0$, so the proof is complete.

%
 
\section{Proof of Theorem~\ref{thm:nsnr-consistent}}  \label{sec:ab-lemma}
First we prove a lemma.  Recall that the shrinkage oracle $\bfr_n^*$ is defined by $\bU_n \mathrm{diag}(d_{n,1}^*, d_{n,2}^*, \dots, d_{n, p_n}^*) \bU_n'$, where $\bU_n = [\bu_{n,1}, \bu_{n,2},\dots, \bu_{n,p_n}]$ is a matrix of column eigenvectors of sample covariance and $d_{n,i}^* = \bu_{n,i}' \bfr_n \bu_{n,i}$.  Because its eigenvalues are the same as sample covariance's, it commutes with any shrinkage estimator.
\begin{lem} \label{lem:ab-lem}
For each $n$, let $\bfrh_n$ be oracle consistent and let $\mathbf{M}_n$ be a random matrix such that $\sup_n \left \Vert \mathbf{M}_n \right \Vert$ is almost surely bounded. Then
\begin{equation} \label{eq:a-power-convergence}
\frac{1}{p_n} \left|\tr(\bfrh_n^a \mathbf{M}_n) - \tr(\bfr_n^{*a} \mathbf{M}_n ) \right| \toprob 0
\end{equation}
for all negative integers $a$.
\end{lem}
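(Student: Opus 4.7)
\medskip

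\noindent\textbf{Proof plan.} The plan is to reduce the case $a=-k$ for arbitrary $k\ge 1$ to the case $a=-1$ by exploiting the fact that $\bfrh_n$ and $\bfrs_n$ share the eigenvector matrix $\bU_n$ and therefore commute. For any two commuting matrices $A,B$ we have the telescoping identity $A^{-k}-B^{-k}=(A^{-1}-B^{-1})\sum_{j=0}^{k-1}A^{-j}B^{-(k-1-j)}$. Writing $\bN_n^{(k)}\coloneqq\sum_{j=0}^{k-1}\bfrh_n^{-j}\bfrs_n{}^{-(k-1-j)}$, we therefore have
\[
\tr(\bfrh_n^{-k}\mathbf{M}_n)-\tr(\bfrs_n{}^{-k}\mathbf{M}_n)=\tr\bigl((\bfrh_n^{-1}-\bfrs_n{}^{-1})\bN_n^{(k)}\mathbf{M}_n\bigr).
\]
The Cauchy--Schwarz inequality for the Frobenius inner product, together with the estimate $\|Y\|_F\le\sqrt{p_n}\|Y\|$, gives
\[
\frac{1}{p_n}\bigl|\tr\bigl((\bfrh_n^{-1}-\bfrs_n{}^{-1})\bN_n^{(k)}\mathbf{M}_n\bigr)\bigr|\le\frac{\|\bfrh_n^{-1}-\bfrs_n{}^{-1}\|_F}{\sqrt{p_n}}\,\|\bN_n^{(k)}\mathbf{M}_n\|,
\]
so the task reduces to bounding each of the two factors on the right.

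\medskip

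\noindent For the second factor, I would note that $\bfrs_n$ is the shrinkage oracle, and since the smallest eigenvalue of each $\bfr_n$ is $1$, each diagonal entry $d^*_{n,i}=\bu_{n,i}'\bfr_n\bu_{n,i}$ satisfies $d^*_{n,i}\ge 1$, so $\|\bfrs_n{}^{-1}\|\le 1$ for every $n$. Combining this with part (ii) of Definition~\ref{def:shrinkage}, which guarantees that $\sup_n\|\bfrh_n^{-1}\|$ is almost surely bounded, and the hypothesis that $\sup_n\|\mathbf{M}_n\|$ is almost surely bounded, the submultiplicativity of the operator norm yields $\sup_n\|\bN_n^{(k)}\mathbf{M}_n\|\le k\,(\sup_n\|\bfrh_n^{-1}\|)^{k-1}\,\sup_n\|\mathbf{M}_n\|<\infty$ almost surely. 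For the first factor, diagonalizing in the common basis $\bU_n$ gives
\[
\|\bfrh_n^{-1}-\bfrs_n{}^{-1}\|_F^2=\sum_{i=1}^{p_n}\frac{(\hat d_{n,i}-d^*_{n,i})^2}{\hat d_{n,i}^{\,2}\,d^{*\,2}_{n,i}}\le C\,\|\bfrh_n-\bfrs_n\|_F^2
\]
for a random constant $C$ that is almost surely bounded in $n$, since the denominators are bounded below by $1/(\|\bfrh_n^{-1}\|^2\|\bfrs_n{}^{-1}\|^2)$. Dividing by $p_n$ and invoking oracle consistency (Definition~\ref{def:s-consistency}) gives $p_n^{-1}\|\bfrh_n^{-1}-\bfrs_n{}^{-1}\|_F^2\toprob 0$, which is the required statement.

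\medskip

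\noindent I do not expect any of these steps to present a real obstacle: once the commuting telescoping identity is in place the rest is a careful application of Cauchy--Schwarz combined with the uniform boundedness provided by Definition~\ref{def:shrinkage}(ii) and the smallest-eigenvalue normalization of $\bfr_n$. The only mild subtlety is remembering that $\|\bfrs_n{}^{-1}\|\le 1$ \emph{automatically} from the normalization assumption $\lambda_{\min}(\bfr_n)=1$; without that the telescoping sum $\bN_n^{(k)}$ could blow up and the argument would not close. The case $k=1$ of the lemma is the essential one and is what the downstream results (for instance the limit of $\xi_n$) actually need, so the passage to general negative integers $a$ via the telescoping identity is really a routine bookkeeping step.
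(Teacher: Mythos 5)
Your argument is sound, but there is nothing in the paper to compare it with: the proof environment attached to this lemma in the manuscript is empty, so your proposal effectively supplies the missing proof rather than an alternative to an existing one. The route you take is the natural one and it closes. The telescoping identity for the commuting pair $\bfrh_n,\bfrs_n$ (they share the diagonalizing basis $\bU_n$, which is precisely the unproved assertion the paper makes in the sentence preceding the lemma), Cauchy--Schwarz for the trace inner product, and $\Vert Y\Vert_{\text{F}}\le\sqrt{p_n}\,\Vert Y\Vert$ reduce everything to (a) an almost-sure operator-norm bound on $\sum_{j=0}^{k-1}\bfrh_n^{-j}\bfr_n^{*-(k-1-j)}\mathbf{M}_n$, which follows from Definition~\ref{def:shrinkage}(ii), the hypothesis on $\mathbf{M}_n$, and $\Vert\bfr_n^{*-1}\Vert\le 1$ (a consequence of $d^*_{n,i}=\bu_{n,i}'\bfr_n\bu_{n,i}\ge\lambda_{\min}(\bfr_n)=1$), and (b) $p_n^{-1}\Vert\bfrh_n^{-1}-\bfr_n^{*-1}\Vert_{\text{F}}^2\toprob 0$, which you correctly deduce from oracle consistency via $(1/\hat d_{n,i}-1/d^*_{n,i})^2=(\hat d_{n,i}-d^*_{n,i})^2/(\hat d_{n,i}^{2}d_{n,i}^{*2})$ and the same eigenvalue lower bounds; an almost-surely bounded random factor times a sequence tending to zero in probability still tends to zero in probability, so the conclusion follows. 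Three minor points. The shared-basis/commutation step deserves an explicit word when $\gamma>1$, where $\bS_n$ has a degenerate null space and the diagonalizing $\bU_n$ is not unique; the intended reading, used throughout the paper, is that $\bfrs_n$ is built from the same $\bU_n$ as $\bfrh_n$. The bound $k\,(\sup_n\Vert\bfrh_n^{-1}\Vert)^{k-1}$ should read $k\,\max\{1,\sup_n\Vert\bfrh_n^{-1}\Vert\}^{k-1}$, which changes nothing. Finally, your closing remark that only $a=-1$ matters downstream is not accurate: the proofs of Lemma~\ref{lem:limit-of-xi} and Theorem~\ref{thm:nsnr-consistent} invoke the lemma with $a=-2$ (e.g.\ to compare $\tr(\bfrh_n^{-2}\bfr_n)$ with $\tr(\bfr_n^{*-2}\bfr_n)$), so the general-$k$ telescoping step is genuinely used rather than mere bookkeeping.
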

\begin{proof}

\end{proof}

We now prove Theorem~\ref{thm:nsnr-consistent}.  First, we note the conclusion of Lemma~1 holds with $\bfrh_n$ replaced by $\bfrs_n$ because the latter matrix is independent of $\bs_n$ and satisfies the conditions of Lemma~\ref{lem:trace}. 
By this extension of Lemma~\ref{lem:trace-approx-eta}, then, it suffices to show 
\[
	\left \vert \tilde{\eta}(\bfrh_n, \bfr_n)  - \tilde{\eta}(\bfrs_n, \bfr_n) \right \vert \toprob 0.
	\]
Using the circulant property of trace and the fact that $p_n^{-1} \tr(\bfr_n^{-1}) \to 1$, the above is equivalent to the statement that
\[
	\left| \frac{X_n}{Y_n} - \frac{\tilde{X}_n}{\tilde{Y}_n} \right| \toprob 0,
	\]
where $X_n = p_n^{-2}\tr(\bfrh_n^{-1})^2$, $Y_n=p_n^{-1} \tr(\bfrh_n^{-2}\bfr_n)$, $\tilde{X}_n =p_n^{-2}\tr(\bfr_n^{*-1})^2$, and $\tilde{Y}_n=p_n^{-1} \tr(\bfr_n^{*-2}\bfr_n)$.  The fact that $|X_n-\tilde{X}_n |\toprob 0$ and $|Y_n-\tilde{Y}_n |\toprob 0$ follows from Lemma~\ref{lem:ab-lem}. The result then follows from Lemma~\ref{lem:quotients} with $f(x,y)=x/y$.

\section{Proof of Lemma~\ref{lem:limit-of-xi}} \label{sec:limit-of-xi}
	By Lemma~\ref{lem:trace} and Lemma~\ref{lem:quotients} with $f(x,y)=x/y$, and by the definition of $\xi_n$ given in Lemma~\ref{lem:limit-of-xi}, we have
\begin{equation} \label{eq:triangle-ineq-1}
	\left \vert \xi_n - \frac{\tr\left(\bfrh_n^{-2}\bfr_n \right)}{\tr\left(\bfrh_n^{-1}\right)} \right \vert \overset{a.s.}{\to} 0.
	\end{equation}
Furthermore, by Lemma~\ref{lem:ab-lem} we have
\[
	\frac{1}{p_n} \left| \tr(\bfrh_n^{-2} \bfr_n) - \tr(\bfr_n^{*-2}\bfr_n) \right| \toprob 0
	\]
and
\[
	\frac{1}{p_n} \left| \tr(\bfrh_n^{-1}) - \tr(\bfr_n^{*-1}) \right| \toprob 0.
	\]
Using Lemma~\ref{lem:quotients} again, the last two equations imply
\begin{align} \label{eq:triangle-ineq-2}
\left \vert \frac{\tr\left(\bfrh_n^{-2}\bfr_n\right)}{\tr\left(\bfrh_n^{-1}\right)} - \frac{\tr\left(\bfr_n^{*-2}\bfr_n\right)}{\tr\left(\bfr_n^{*-1}\right)} \right \vert \toprob 0.
\end{align}
Combining \eqref{eq:triangle-ineq-1} and \eqref{eq:triangle-ineq-2}, we get
\begin{align*}
\left \vert \xi_n- \frac{\tr\left(\bfr_n^{*-2}\bfr_n\right)}{\tr\left(\bfr_n^{*-1}\right)} \right \vert \toprob 0.
\end{align*}
The quantity
\[
	\frac{\tr\left(\bfr_n^{*-2}\bfr_n\right)}{\tr\left(\bfr_n^{*-1}\right)}
	\]
 is identically 1 by using the identity
\begin{align}
	{\tr\left(\bfr_n^{*-2}\bfr_n\right)} & =\tr(\mathbf{D}^{*-2}_n \mathbf{U}_n' \bfr_n \mathbf{U}_n) \nonumber \\
		& = \tr(\mathbf{D}^{*-1}_n) \nonumber \\
		& = \tr(\bfrsi_n), \label{eq:nostars}
	\end{align}
yielding the desired result.

\section{Proof of Lemma~\ref{thm:sinr}} \label{sec:pf-thm-sinr}

We claim that
\[
\pi_n(t) := \Pr[|\hat{\nu}_n(\bx)-\nu_n|\ge t \mid \fH_1^n, \bs_n, \bX_n]
\]
satisfies  $\pi_n(t) \lesssim e^{-t^2}$, where as before, if $X_n$ and $Y_n$ are random variables, $X_n \lesssim Y_n$ means $\max\{X_n - Y_n, 0\}$ converges in probability to 0 as $n\to\infty$.
By the continuity of $t\mapsto e^{-t^2}$, it suffices to show $\pi_n(t) \lesssim e^{-(t-\delta)^2}$ for every $\delta > 0$.  

Our approach is to find $\hat{\mu}_n(\bx)$ and $\mu_n$---approximations of $\hat{\nu}_n(\bx)$ and $\nu_n$, resp.---such that the expression $\tilde{\pi}_n(t)$ defined by
\[
\tilde{\pi}_n(t) := \Pr[|\hat{\mu}_n(\bx) - \mu_n| \ge t \mid \fH_1^n, \bs_n, \bX_n]
\]
satisfies 
\begin{equation} \label{eq:pi-convergence}
\pi_n(t) \lesssim \tilde{\pi}_n(t-\delta),
\end{equation}
for all $\delta> 0$, and satisfies
 \begin{equation} \label{eq:pitilde-bound}
 \tilde{\pi}_n(t) \le e^{-t^2}.
 \end{equation}
 To this end, let us make the definitions
 \[
\mu_n \coloneqq \nu(\bs_n,\bfrs_n,\bfr_n) = |a| \frac{\bs_n'\bfr_n^{*-1}\bs_n}{(\bs_n'\bfr_n^{*-1}\bfr_n\bfr_n^{*-1}\bs_n)^{1/2}}
 \]
 and
\[
\hat{\mu}_n(\bx)\coloneqq  \frac{|\bs'_n \bfr_n^{*-1} \bx|}{\left(\bs_n' \bfr_n^{*-1} \bfr_n \bfr_n^{*-1} \bs_n\right)^{1/2}}.
\]
The inequality \eqref{eq:pitilde-bound} follows immediately since, by the triangle inequality,
  \begin{align*}
 & \left|\hat{\mu}_n(\bx)-\mu_n \right|\\
  & \le \left|\frac{\bs_n'\bfr_n^{*-1}(\bx-a\bs_n)}{(\bs_n'\bfr_n^{*-1}\bfr_n\bfr_n^{*-1}\bs_n)^{1/2}}\right|\\
  & = \frac{|\bs_n'\bfr_n^{*-1}\bd|}{(\bs_n'\bfr_n^{*-1}\bfr_n\bfr_n^{*-1}\bs)^{1/2}}=:|Z|,
\end{align*}
which is the modulus of a (complex, circularly symmetric) Gaussian-distributed
random variable $Z$ with mean zero
and variance one. 

For \eqref{eq:pi-convergence}, we need two lemmas.
\begin{lem} \label{lem:numu}
With $\nu_n$ and $\mu_n$ as above, $|\nu_n - \mu_n| \toprob 0$.
\end{lem}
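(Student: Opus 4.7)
The plan is to reduce the lemma to Theorem~\ref{thm:nsnr-consistent} via an algebraic identity relating the effective SINR $\nu$ to the normalized SINR $\eta$. Comparing the definitions \eqref{eq:rho-squared} and \eqref{eq:esnr} directly yields
\[
\nu(\bs,\bfrh,\bfr)^{2} \;=\; |a|^{2}\,(\bs'\bfr^{-1}\bs)\,\eta(\bs,\bfrh,\bfr),
\]
so that in our setting
\[
\nu_{n}^{2}-\mu_{n}^{2} \;=\; |a|^{2}\,(\bs_{n}'\bfr_{n}^{-1}\bs_{n})\bigl[\eta(\bs_{n},\bfrh_{n},\bfr_{n})-\eta(\bs_{n},\bfrs_{n},\bfr_{n})\bigr].
\]

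Next I would bound the scalar prefactor. Because $\|\bs_{n}\|=1$ and the smallest eigenvalue of $\bfr_{n}$ equals $1$, we have $\|\bfr_{n}^{-1}\|\le 1$, so $0\le \bs_{n}'\bfr_{n}^{-1}\bs_{n}\le 1$ deterministically. Theorem~\ref{thm:nsnr-consistent} supplies $\eta(\bs_{n},\bfrh_{n},\bfr_{n})-\eta(\bs_{n},\bfrs_{n},\bfr_{n})\toprob 0$; multiplying by a uniformly bounded factor (and by the constant $|a|^{2}$) therefore gives $|\nu_{n}^{2}-\mu_{n}^{2}|\toprob 0$.

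Finally, I would promote convergence of the squares to convergence of $\nu_{n}-\mu_{n}$ itself via the elementary H\"older-type inequality $|\sqrt{x}-\sqrt{y}|\le \sqrt{|x-y|}$, valid for all $x,y\ge 0$. Since $\nu_{n},\mu_{n}\ge 0$, this yields
\[
|\nu_{n}-\mu_{n}|\;\le\;\sqrt{|\nu_{n}^{2}-\mu_{n}^{2}|}\;\toprob\;0,
\]
as desired. There is no genuine obstacle here: the work has been done in Theorem~\ref{thm:nsnr-consistent}, and the remainder is just recognizing the factorization $\nu^{2}=|a|^{2}(\bs'\bfr^{-1}\bs)\,\eta$ and applying the standard H\"older bound for the square root. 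The only point at which one might be tempted to slip is passing from squares to the quantities themselves via a continuous-mapping or Slutsky-style argument; using the explicit elementary inequality sidesteps any such appeal entirely.
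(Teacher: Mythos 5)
Your proof is correct, but it takes a genuinely different route from the paper. The paper proves the lemma directly: it applies Lemma~\ref{lem:trace} (concentration of the quadratic forms $\bs_n'\bfrh_n^{-1}\bs_n$ and $\bs_n'\bfrh_n^{-1}\bfr_n\bfrh_n^{-1}\bs_n$ around normalized traces) together with Lemma~\ref{lem:ab-lem} to show $|X_n-\tilde X_n|\toprob 0$ and $|Y_n-\tilde Y_n|\toprob 0$ for the numerator and denominator quantities of $\nu_n$ and $\mu_n$, and then invokes Lemma~\ref{lem:quotients} with $f(x,y)=x/y^{1/2}$. You instead observe the exact identity $\nu^2(\bs,\bfrh,\bfr)=|a|^2(\bs'\bfr^{-1}\bs)\,\eta(\bs,\bfrh,\bfr)$, so that $\nu_n^2-\mu_n^2$ factors through the NSINR difference controlled by Theorem~\ref{thm:nsnr-consistent}, and you pass from squares to the quantities themselves via $|\sqrt{x}-\sqrt{y}|\le\sqrt{|x-y|}$; this last choice is a nice touch, since factoring $\nu_n^2-\mu_n^2=(\nu_n-\mu_n)(\nu_n+\mu_n)$ would have required bounding $\nu_n+\mu_n$ away from zero, which your inequality avoids. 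Your deterministic bound $0\le\bs_n'\bfr_n^{-1}\bs_n\le 1$ (from $\|\bs_n\|=1$ and smallest eigenvalue of $\bfr_n$ equal to $1$) is valid, and there is no circularity: Theorem~\ref{thm:nsnr-consistent} is proved in Appendix~\ref{sec:ab-lemma} without reference to this lemma. What each approach buys: yours is shorter and reuses an already-established theorem; the paper's direct argument establishes convergence at the level of the individual quadratic forms, which is machinery it immediately reuses in the proof of Lemma~\ref{lem:numuhat}, so the authors' choice keeps the two neighboring proofs parallel rather than being logically necessary.
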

\begin{proof}
Using Lemma~\ref{lem:trace} twice and Lemma~\ref{lem:ab-lem} once, $X_n$ defined to be $\bs_n'\bfrh_n^{-1}\bs_n$ satisfies $|X_n - \tilde{X}_n| \toprob 0$, where $\tilde{X}_n = \bs_n'\bfr_n^{*-1}\bs_n$.  Similarly $Y_n$, defined to be $\bs_n'\bfrh_n^{-1}\bfr_n\bfrh_n^{-1}\bs_n$ satisfies $|Y_n-\tilde{Y}_n|\toprob 0$, where $\tilde{Y}_n = \bs_n'\bfr_n^{*-1}\bfr_n\bfr_n^{*-1}\bs_n$.  Applying Lemma~\ref{lem:quotients} with $f(x,y)=x/y^{1/2}$, we have $|X_n/Y_n^{1/2}-\tilde{X}_n/\tilde{Y}_n^{1/2}| \toprob 0$, as desired.
\end{proof}

The second lemma is as follows.
\begin{lem} \label{lem:numuhat}
Let $\epsilon_n(\bx) = |\hat{\nu}_n(\bx)-\hat{\mu}_n(\bx)|$ and let $\delta_n \toprob 0$. Then, for all $\delta>0$, we have
\[
\tilde{\epsilon}_n(\delta) \coloneqq \Pr[\epsilon_n(\bx) +\delta_n \ge \delta \mid \fH_1^n, \bs_n, \bX_n] \toprob 0.
\]
\end{lem}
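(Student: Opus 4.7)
The plan is to reduce the conditional claim to unconditional convergence $\epsilon_n(\bx)\toprob 0$. Since $\tilde\epsilon_n(\delta)\in[0,1]$, the tower property yields $\mathbb{E}[\tilde\epsilon_n(\delta)]=\Pr[\epsilon_n(\bx)+\delta_n\ge\delta]$; by Markov's inequality for non-negative random variables, $\tilde\epsilon_n(\delta)\toprob 0$ whenever the right-hand unconditional probability goes to zero. Since $\delta_n\toprob 0$ by hypothesis, it therefore suffices to show $\epsilon_n(\bx)\toprob 0$.

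To that end I would set $A_n := \bs_n'\bfrh_n^{-1}\bx$, $\tilde A_n := \bs_n'\bfrsi_n\bx$, $B_n := (\bs_n'\bfrh_n^{-1}\bs_n)^{1/2}$, and $\tilde B_n := (\bs_n'\bfrsi_n\bfr_n\bfrsi_n\bs_n)^{1/2}$, so that $\hat\nu_n(\bx)=|A_n|/B_n$ and $\hat\mu_n(\bx)=|\tilde A_n|/\tilde B_n$. The elementary bound
\[
\bigl| |A_n|/B_n - |\tilde A_n|/\tilde B_n \bigr| \;\le\; \frac{|A_n-\tilde A_n|}{B_n} + \frac{|\tilde A_n|\,|B_n-\tilde B_n|}{B_n\tilde B_n}
\]
reduces the task to controlling two pieces. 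By Definition~\ref{def:shrinkage}(ii) together with the fact that the eigenvalues of $\bfrs_n$ lie in $[1,M]$, both $B_n$ and $\tilde B_n$ are almost surely bounded above and bounded below away from zero, so the denominators cause no harm. Combining the trace approximation underlying Lemma~\ref{lem:trace}, Lemma~\ref{lem:ab-lem}, and the identity $\tr(\bfrs_n^{-2}\bfr_n)=\tr(\bfrsi_n)$ from~\eqref{eq:nostars}, I would then obtain $B_n^2-\tilde B_n^2\toprob 0$ and hence $B_n-\tilde B_n\toprob 0$.

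For the numerator, write $\bx = a\bs_n + \bd$ with $\bd\sim\mathcal{CN}(\mathbf{0},\bfr_n)$ independent of $(\bs_n,\bX_n)$, so that
\[
A_n - \tilde A_n \;=\; a\,\bs_n'(\bfrh_n^{-1}-\bfrsi_n)\bs_n \;+\; \bs_n'(\bfrh_n^{-1}-\bfrsi_n)\bd.
\]
The first summand $\toprob 0$ by the same trace/Lemma~\ref{lem:ab-lem} argument used for $B_n-\tilde B_n$. The second is, conditionally on $(\bs_n,\bX_n)$, zero-mean complex Gaussian with variance
\[
V_n = \bs_n'\bfrh_n^{-1}\bfr_n\bfrh_n^{-1}\bs_n - 2\,\Re\bigl(\bs_n'\bfrh_n^{-1}\bfr_n\bfrsi_n\bs_n\bigr) + \bs_n'\bfrsi_n\bfr_n\bfrsi_n\bs_n.
\]
Each of the three quadratic forms is asymptotic to $p_n^{-1}\tr(\bfrsi_n)$ via Lemma~\ref{lem:trace}, Lemma~\ref{lem:ab-lem}, and~\eqref{eq:nostars}, so the $1-2+1$ cancellation gives $V_n\toprob 0$. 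Conditional Gaussian concentration combined with bounded convergence then upgrades to unconditional $\bs_n'(\bfrh_n^{-1}-\bfrsi_n)\bd\toprob 0$. Finally, $|\tilde A_n|=O_p(1)$ because its conditional mean is bounded by $|a|$ and its conditional variance is $\tilde B_n^2$, which is bounded above. Putting it all together yields $|A_n-\tilde A_n|/B_n\toprob 0$ and $|\tilde A_n|\,|B_n-\tilde B_n|/(B_n\tilde B_n)=O_p(1)\cdot o_p(1)\toprob 0$, hence $\epsilon_n(\bx)\toprob 0$ as required.

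The main obstacle will be the cancellation $V_n\toprob 0$: the three quadratic forms in $V_n$ are only forced to share the common limit $p_n^{-1}\tr(\bfrsi_n)$ because $\tilde B_n$ was chosen to be built from $\bfrsi_n\bfr_n\bfrsi_n$ rather than from $\bfrsi_n$ itself, which makes the key trace identity from the proof of Lemma~\ref{lem:limit-of-xi} applicable to all three terms simultaneously. Everything else is routine bookkeeping between the trace-approximation machinery of Lemma~\ref{lem:trace} and the shrinkage-to-oracle swap of Lemma~\ref{lem:ab-lem}.
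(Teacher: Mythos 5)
Your proposal is correct, and it reaches the conclusion by a genuinely different route from the paper, even though the analytic core coincides. The paper stays conditional throughout: it bounds $\epsilon_n(\bx)\le|\bs_n'\mathbf{P}_n\bx|$ by the reverse triangle inequality applied to the two normalized filter outputs (with the normalizations folded into $\mathbf{P}_n$), observes that $\bs_n'\mathbf{P}_n\bx$ is conditionally Gaussian with bounded noncentrality, bounds $\tilde{\epsilon}_n(\delta)$ by a Marcum-$Q$ tail $Q\bigl((\max\{\delta-\delta_n,0\})^2/v_n,\,C\bigr)$, and then shows the conditional variance $v_n=\bs_n'\mathbf{P}_n\bfr_n\mathbf{P}_n\bs_n\toprob 0$. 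You instead (i) reduce the conditional statement to the unconditional one $\epsilon_n(\bx)\toprob 0$ via the tower property and Markov's inequality, which eliminates the need for the noncentral-chi tail bound, the bound on the noncentrality parameter, and the continuous-mapping step, and (ii) split $\hat{\nu}_n-\hat{\mu}_n$ into numerator and denominator contributions rather than working with the normalized filters directly. Both proofs then hinge on exactly the same mechanism: quadratic forms in $\bs_n$ replaced by normalized traces, the oracle swap of Lemma~\ref{lem:ab-lem}, and the identity \eqref{eq:nostars} forcing the ``$1-2+1$'' cancellation (your $V_n\toprob 0$ is the unnormalized counterpart of the paper's $v_n\toprob 0$, where the paper's \eqref{eq:first-half-V}--\eqref{eq:second-half-V} play the role of your three quadratic forms). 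What your route buys is economy and a cleaner logical structure, at the cost of proving only convergence in probability of the conditional probability (which is all the lemma asserts and all that is used downstream), whereas the paper's conditional argument yields an explicit conditional tail estimate. One technical point to flag, though it does not put you below the paper's own standard of rigor: your cross term $\Re(\bs_n'\bfrh_n^{-1}\bfr_n\bfrsi_n\bs_n)$ involves a non-Hermitian product whose Hermitian part need not be positive definite, so the quadratic-form-to-trace step must be applied to that Hermitian part and one must note that the supporting concentration lemma does not actually require positive definiteness; the paper's proof faces the identical issue with $\bs_n'\bfr_n^{*-1}\bfr_n\bfrh_n^{-1}\bs_n$ and glosses over it in the same way.
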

\begin{proof}
The quantity $\epsilon_n(\bx)$ is equal to
\[
\left| |\hat{\mathbf{w}}_n' \bx|-|(\mathbf{w}^*_n)'\bx|\right|,
\]
where
\[
\hat{\mathbf{w}}_n' = \frac{\bs_n'\bfrh_n^{-1}}{\eta_n^{1/2}}\qquad\text{and}\qquad(\mathbf{w}^*_n)' = \frac{\bs_n'\bfr_n^{*-1}}{\eta_n^{*1/2}}
\]
and
\[
\eta_n = \bs_n'\bfrh_n^{-1}\bs_n\qquad\text{and}\qquad \eta^*_n = \bs_n \bfr_n^{*-1}\bfr_n\bfr_n^{*-1}\bs_n.
\]
Thus,
\begin{equation} \label{eq:bound-for-epsilon-n}
\epsilon_n(\bx) \le |(\hat{\mathbf{w}}_n-\mathbf{w}^*_n)'\bx| = |\bs_n'\mathbf{P}_n\bx|,
\end{equation}
where
\[
\mathbf{P}_n = \frac{\bfrh^{-1}_n}{(\bs_n'\bfrh_n^{-1}\bs_n)^{1/2}} - \frac{\bfr^{*-1}_n}{\left(\bs_n \bfr^{*-1}_n \bfr_n \bfr^{*-1}_n \bs_n\right)^{1/2}}.
\]
For fixed $\bs_n,\bfrh_n$, and $\bfr_n$, the right side of \eqref{eq:bound-for-epsilon-n} is the absolute value of the random variable $V_n=\bs'_n\mathbf{P}_n\bx$---conditionally Gaussian given $\bs_n$ and $\bX_n$---which has conditional mean $a\bs_n'\mathbf{P}_n\bs_n$ and conditional variance $v_n = \bs_n'\mathbf{P}_n\bfr_n \mathbf{P}_n\bs_n$,.  Thus, we have that $|V_n|/v_n^{1/2}$ is non-central Chi-distributed with noncentrality parameter
\[
\lambda_n = |a|^2(\bs_n\mathbf{P}_n\bs_n)^2/v_n.
\]
We have that
\begin{align*}
& \tilde{\epsilon}_n(\delta) \\
 & \le \Pr[|V_n|+\delta_n \ge \delta \mid \fH_1^n, \bs_n, \bX_n ] \\ 
& = \Pr[|V_n|^2/v_n\ge (\max\{(\delta-\delta_n),0\})^2/v_n \mid \fH_1^n, \bs_n, \bX_n] \\
& = Q((\max\{\delta-\delta_n,0\})^2/v_n, \lambda_n).
\end{align*}
Further, by the triangle inequality; the boundedness of $\sup_n \left\Vert \bfrh_n^{-1}\right\Vert$, $\left\Vert\bfr_n^{\pm 1}\right\Vert$, and $\left\Vert \bfr_n^{*\pm 1}\right\Vert$; and the almost sure boundedness of $\limsup_n\left\Vert\bfrh_n\right\Vert$, we have that $\lambda_n$ is bounded above by a constant $C>0$ for $n$ sufficiently large.  Thus, $\tilde{\epsilon}_n(\delta)$ is bounded above according to
\[
\tilde{\epsilon}_n(\delta) \le Q((\max\{\delta-\delta_n,0\})^2/v_n,C).
\]
Let $\epsilon > 0$. Since $\delta_n\toprob 0$, taking $n$ large enough, with high probability we have $\delta-\delta_n \ge \delta/2$. Then $\tilde{\epsilon}_n(\delta) \le Q(\delta^2 /(4v_n), C)$.  If $v_n \toprob 0$, then by the continuous mapping theorem and the fact that $\lim_{\tau \to\infty} Q(\tau,C)= 0$, the proof will be complete.

We will thus aim to show that $v_n\toprob 0$. We have that $v_n$ is equal to the sum of
\begin{equation} \label{eq:first-half-V}
\frac{\bs'_n\bfrh_n^{-1}\bfr_n\bfrh_n^{-1}\bs_n}{\eta_n} - \frac{\bs_n'\bfr_n^{*-1}\bfr_n\bfrh_n^{-1}\bs_n}{\eta_n^{1/2}\eta_n^{*1/2}}
\end{equation}
and
\begin{equation} \label{eq:second-half-V}
\frac{\bs'_n\bfr_n^{*-1}\bfr_n\bfr_n^{*-1}\bs_n}{\eta^*_n} - \frac{\bs_n'\bfr_n^{*-1}\bfr_n\bfrh_n^{-1}\bs_n}{\eta_n^{1/2}\eta_n^{*1/2}}
\end{equation}
By Lemma~\ref{lem:trace} twice and the triangle inequality, \eqref{eq:first-half-V} converges to zero if
\begin{equation} \label{eq:different-nums}
\left| \frac{p_n^{-1}\tr(\bfrh_n^{-2}\bfr_n)}{\eta_n} - \frac{p_n^{-1} \tr(\bfr_n^{*-2}\bfr_n)}{\eta_n^{1/2}\eta_n^{*1/2}}\right| \toprob 0.
\end{equation}
Let $X_n = p_n^{-1}\tr(\bfrh_n^{-2}\bfr_n)$ and $\tilde{X}_n = p_n^{-1}\tr(\bfr_n^{*-2}\bfr_n)$ and $Y_n= \tilde{Y}_n = \eta_n$ and $Z_n = \eta_n$ and $\tilde{Z}_n = \eta^*_n$. 
By Lemma~\ref{lem:trace}, $|Z_n - p_n^{-1}\tr(\bfrh_n^{-1})| \toprob 0$.  By Lemma~\ref{lem:ab-lem}, then,
$|Z_n - p_n^{-1}\tr(\bfr_n^{*-1})| \toprob 0$.  By Lemma~\ref{lem:trace}, $|\tilde{Z}_n - p_n^{-1}\tr(\bfr_n^{*-1}\bfr_n\bfr_n^{*-1})|\toprob 0$. By \eqref{eq:nostars} the latter trace is identically equal to $\tr(\bfr_n^{*-1})$.  Thus, $|Z_n-\tilde{Z}_n| \toprob 0$.  By Lemma~\ref{lem:quotients} with $f(x,y,z)=x/(y^{1/2}z^{1/2})$, $|f(X_n,Y_n,Z_n) - f(\tilde{X}_n,\tilde{Y}_n,\tilde{Z}_n)| \toprob 0$, proving \eqref{eq:first-half-V} converges to 0 in probability.  The proof for \eqref{eq:second-half-V} is similar.  Thus, we have proven $v_n\toprob 0$, as desired. 
\end{proof}

We may now prove  \eqref{eq:pi-convergence}.  Let $\delta>0$ and let $\delta_n = |\nu_n-\mu_n|$.  We have
\begin{align*}
\pi_n(t) & \le \Pr[\epsilon_n(\bx)+\delta_n+|\hat{\mu}_n(\bx)-\mu_n|\ge t\mid \fH_1^n, \bs_n, \bX_n] \\
& \le \epsilon'_n + \tilde{\pi}_n(t-\delta),
\end{align*}
where $\epsilon'_n=\Pr[\epsilon_n(\bx)+\delta_n \ge \delta \mid \fH_1^n, \bs_n, \bX_n] $.  Subtracting and taking the max with 0 gives
\[
\max\{\pi_n(t)- \tilde{\pi}_n(t-\delta),0\} \le \epsilon'_n.
\]
Thus, by Lemmas~\ref{lem:numu} and \ref{lem:numuhat}, $\pi_n(t)$ is less than or asymptotically equal to $\tilde{\pi}_n(t-\delta)$ and the proof is complete.

\section{Proof of Theorem~\ref{thm:pd}} \label{sec:pd}

Let $Q_\rho(\nu) = Q(\rho,\nu^2)$ for $\rho,\nu\ge 0$, and let $q_{n\pm}(\bx) = Q_\tau(\nu_{n\pm}(\bx,\myq))$. Recall that we can write $p_{\text{d}}^n(\tau)$ as $Q(\tau/\xi_n,\nu_n^2)=Q_{\tau/\xi_n}(\nu_n)$. We thus wish to show that
\[
\pi_n \coloneqq \Pr\left[ q_{n-}(\bx) \le Q_{\tau/\xi_n}(\nu_n) \le q_{n+}(\bx)\mid \fH_1^n,\bs_n, \bX_n \right]
\]
is greater than or asymptotically equal to $\myq$. Equivalently, if we define $t$ to be $\sqrt{\log(1/(1-\myq))}$, we wish to show $\pi_n$ is greater than or asymptotically equal to $1-e^{-t^2}$.

Based on Definition~\ref{def:shrinkage}(ii), there is a constant $C > 0$ such that $\limsup_n \nu_n < C$ almost surely.  Given $\delta > 0$, this means that there is $n_0 = n_0(\delta)$ such that $\nu_n < C$ for $n\ge n_0$ except with probability at most $\delta$. Similarly, there is a constant $c > 0$ such that $\liminf_n \nu_n > c$ almost surely.  There is thus $n_0$ is large enough that
\begin{equation} \label{eq:nu-bounds}
c < \nu_n < C
\end{equation}
for $n\ge n_0$ except with probability at most $\delta$.  In order to prove the desired convergence in probability, we assume \eqref{eq:nu-bounds} holds for the rest of this proof.

First, we study $\tilde{\pi}_n(\epsilon)$, defined as
\[
\Pr\left[ q_{n-}(\bx)+\epsilon \le Q_\tau(\nu_n) \le q_{n+}(\bx) - \epsilon \mid\fH_1^n, \bs_n, \bX_n \right].
\]
It can be shown using elementary analysis that $Q_\tau^{-1}$, which is a mapping from $[Q_\tau(0),1)$ to $[0,\infty)$, is well-defined and Lipschitz on $(a,b)$ for $a>Q_\tau(0)$ and $b<1$. 
 Suppose $0< \epsilon < c/2$. Then by monotonicity of $Q_\tau$, the numbers $Q_\tau(\nu_n-\epsilon)$, $Q_\tau(\nu_n)$, and $Q_\tau(\nu_n+\epsilon)$ all lie in the interval $(Q_\tau(c/2), Q_\tau(C+c/2))$. Thus, there is a Lipschitz constant $L>0$ such that
 \begin{align*}
 \epsilon & = Q_\tau^{-1}Q_\tau(\nu_n+\epsilon)-Q_\tau^{-1}Q_\tau(\nu_n) \\
 & \le L(Q_\tau(\nu_n+\epsilon)-Q_\tau(\nu_n))
 \end{align*}
 and
  \begin{align*}
 \epsilon & = Q_\tau^{-1}Q_\tau(\nu_n)-Q_\tau^{-1}Q_\tau(\nu_n-\epsilon) \\
 & \le L(Q_\tau(\nu_n)-Q_\tau(\nu_n-\epsilon)).
 \end{align*}
 
Continue assuming that $0<\epsilon < c/2$. By the above the statement $q_{n-}(\bx)+\epsilon/L \le Q_\tau(\nu_n)$ is implied by $q_{n-}(\bx)  \le Q_\tau(\nu_n-\epsilon)$ and the statement $Q_\tau(\nu_n) \le q_{n+}(\bx) - \epsilon/L $ is implied by $Q_\tau(\nu_n+\epsilon) \le q_{n+}(\bx)$.  Thus, $\tilde{\pi}_n(\epsilon/L) $ is greater than or equal to the probability that
 \begin{align*}
q_{n-}(\bx)\le Q_\tau(\nu_n-\epsilon) \wedge Q_\tau(\nu_n+\epsilon) \le q_{n+}(\bx)
 \end{align*}
 given $\fH_1^n$ and $\bs_n$ and $\bX_n$.
 Using monotonicity of $Q_\tau$ again, the latter is equal to
 \begin{align*}
 & \Pr\left[ \nu_{n-}(\bx)\le \nu_n-\epsilon \wedge \nu_n+\epsilon \le \nu_{n+}(\bx)\mid\fH_1^n, \bs_n, \bX_n \right] \\
 & = \Pr\left[ \left| \nu_n(\bx)-\nu_n \right| \le t-\epsilon \mid \fH_1^n, \bs_n, \bX_n \right].
 \end{align*}
 But by Lemma~\ref{thm:sinr}, this means that $\tilde{\pi}_n(\epsilon/L)$ is   asymptotically greater than or equal to $1-e^{-(t-\epsilon)^2}$.  Thus,
 \[
 M_n \coloneqq \max\{1-e^{-(t-\epsilon)^2}-\tilde{\pi}(\epsilon/L),0\} \toprob 0
 \]
 as $n\to\infty$.
 
 Next consider $\tilde{M}_n \coloneqq \max\{ \tilde{\pi}_n(\epsilon/L) - \pi_n, 0\}$ with $L$ and $\epsilon$ as above. Using the formula $\Pr[A]-\Pr[B] \le \Pr[A \backslash B]$, and using elementary logical manipulations, we have that
 \[
 \tilde{M}_n \le \mathbf{1}\left( \left| Q_{\tau/\xi_n}(\nu_n) - Q_\tau (\nu_n) \right| > \epsilon/L \right),
 \]
 where $\mathbf{1}$ is the indicator function. 
 Using the fact that $I_0(z)$ is continuous and asymptotically equal to $e^z/\sqrt{2\pi z}$ as $z\to \infty$ \cite{abramowitz1988handbook}, 
 there is a constant $B>0$ such that 
 \[
 \left|Q_{\tau/\xi_n}(\nu_n) - Q_\tau(\nu_n)\right| \le B|\tau/\xi_n - \tau|.
 \]
 It then follows that
 \[ 
 \tilde{M}_n \le \mathbf{1}\left( B \left| \tau/\xi_n-\tau\right| > \epsilon/L\right).
 \]
 But this converges in probability to 0 by Lemma~\ref{lem:limit-of-xi} as $n\to\infty$. Thus, $\tilde{M}_n$ goes to zero in probability as $n\to\infty$.

Finally, consider $M_n^* \coloneqq \max\{1-e^{-t^2}-\pi_n,0\}$.  We have that
\[
0 \le M_n^* \le 1-e^{-t^2}-(1-e^{-(t-\epsilon)^2})+ M_n + \tilde{M}_n.
\]
Since the latter two terms converge in probability to zero and $\epsilon$ can be arbitrarily small, the result follows.

\section*{Acknowledgment}

This work was supported by the United States Air Force Sensors Directorate, AFOSR grant 19RYCOR036, and ARO grant W911NF-15-1-0479. However, the views and opinions expressed in this article are those of the authors and do not necessarily reflect the official policy or position of any agency of the U.S. government. Examples of analysis performed within this article are only examples. Assumptions made within the analysis are also not reflective of the position of any U.S. Government entity. The Public Affairs approval number of this document is 88ABW-2020-3037.

\ifCLASSOPTIONcaptionsoff
  \newpage
\fi



%
\bibliographystyle{plain}
\bibliography{RMH-STAP-bib}

%

%
%
%




\end{document}